\newtheorem{Theorem}{Theorem}
\newtheorem{Definition}{Definition}
\newtheorem{Example}{Example}
\newtheorem{Remark}{Remark}
\newtheorem{Lemma}{Lemma}
\pgfplotsset{compat=1.3}
\tikzstyle{help lines}=[black!20,dashed]
\definecolor{light_gray}{rgb}{0.6,0.6,0.6}
\definecolor{awgray}{rgb}{0.7,0.7,0.7}
\definecolor{awgray_dark}{rgb} {0.4,0.4,0.4}
\tikzset{
	>=stealth',
	mycircle/.style={circle, draw=gray, very thick},
	mycircle_small/.style={circle,draw=awgray_dark,fill = awgray_dark, inner sep=0,minimum size=.6em},
	mycircle_small_black/.style={circle,draw=black,fill = black, inner sep=0,minimum size=.6em},
	mybox/.style={rectangle,rounded corners,draw=black, thick,text width=1em,minimum height=4em,minimum width=4em,text centered},
	mybox_small/.style={rectangle,rounded corners,draw=black, thick,text width=1em,minimum height=2em,minimum width=2em,text centered},
	mybox_vec/.style={rectangle,rounded corners,draw=black, thick,text width=1em,minimum height=0.7em, minimum width=4em,text centered},
	mybox_vec_short/.style={rectangle,rounded corners,draw=black, thick,text width=1em,minimum height=0.7em, minimum width=2em,text centered},
	pil/.style={->, thick, shorten <=2pt, shorten >=2pt,},
}
\begin{document}

\title{A New Cooperative Repair Scheme with $k+1$ Helper Nodes for $(n,k)$ Hadamard MSR codes  with Small Sub-packetization
\author{Yajuan Liu, Han Cai,~\IEEEmembership{Member,~IEEE}, and Xiaohu Tang, \IEEEmembership{Senior Member, IEEE}
\thanks{
Y. Liu, H. Cai, and X. Tang are with the Information Security and National Computing Grid Laboratory, Southwest Jiaotong University, Chengdu, China (email: yjliu@my.swjtu.edu.cn, hancai@swjtu.edu.cn, xhutang@swjtu.edu.cn).
}}}

\maketitle

\begin{abstract}

Cooperative repair model is an available technology to deal with multiple node failures in distributed storage systems. Recently, explicit constructions of cooperative MSR codes were given by Ye (IEEE Transactions on Information Theory, 2020) with sub-packetization level $(d-k+h)(d-k+1)^n$. Specifically, the sub-packetization level is $(h+1)2^n$ when $d=k+1$. In this paper, we propose a new cooperative repair scheme by means of the inter-instance and intra-instance pairing inherited from the perfect code which reduces the sub-packetization to $2^n$ when $(h+1)|2^n$ and $(2\ell+1)2^n$ when $h+1=(2\ell+1)2^m$ for $m\ge 0$, $\ell\ge 1$ with $d=k+1$ helper nodes.  That is to say,  the sub-packetization is $h + 1 $ times  or  $2^m$ times less than Ye's. It turned out to be the best result so far known.

\end{abstract}

\begin{IEEEkeywords}
Cooperative repair, MDS codes, sub-packetization, perfect code, optimal repair bandwidth.
\end{IEEEkeywords}

\section{Introduction}

With the widespread deployment of large-scale distributed storage systems, such as Facebook's coded Hadoop, Google Colossus, and Microsoft Azure,  reliability is becoming one of the major concerns
so that  the redundancy is imperative. In Particular,  the
maximum distance separable (MDS) code, such as Reed-Solomon code \cite{Polynomial}, offers maximum reliability at the same redundancy level and thus is the most attractive. However,
when a storage node fails, the conventional MDS codes adopt a  naive recovery strategy which first reconstructs the original file and then repairs the failed node.
It  results in a large \textit{repair bandwidth}, which is defined as the amount of data downloaded to repair failed nodes.

In \cite{Network-coding}, regenerating code  introduced by Dimakis \textit{et al.} is shown to achieve the best tradeoff between  the repair bandwidth and storage overhead.  The minimum storage regenerating (MSR) code is one of the two most important regenerating codes, which can maintain the systematic MDS property and has the optimal repair bandwidth.  In the past decades, various MSR code constructions  have been proposed, refer to \cite{balaji2018erasure} for details.

The aforementioned MSR codes  only focus on a single node failure. Whereas, multiple node failures in large-scale  distributed storage systems are the norm rather than the exception. In this scenario, to reduce the management cost,  the repair mechanism is   triggered   only after the total amount of failed nodes reaches a given threshold. So far,  there are two models for repairing multiple node failures. The first one is the centralized repair model, where all the failed nodes are recreated  at a data center. The other is the cooperative repair  model, where the new nodes respectively download data from helper nodes and then communicate with each other to finish the repair process.  It is proved in \cite{Cooperative-repair} that cooperative repair model is stronger than centralized repair model since the optimality of an MDS code under the former implies its optimality under the latter. Besides, the cooperative repair model is more suitable for distributed storage systems owing to the distributed architecture. Therefore, we concentrate on the cooperative repair model in this paper.

The $(n,k)$ MDS code $\mathcal{C}$ consists of $k$ systematic nodes and $r=n-k$ parity nodes, each node storing $N$ symbols. It is a typical high rate storage code in distributed storage system such that the data of any $k$ out of $n$ nodes suffice to reconstruct the whole source data. Specifically,  $N$ is said to be \textit{sub-packetization} level of code $\mathcal{C}$ in the literature. Assume that there are $h$ failed nodes.  To repair them, one needs to connect $d\ge k$ helper nodes.
Under the  cooperative repair model, the repair process  is composed of two phases \cite{Cooperative-regenerating-codes}:
\begin{enumerate}
	\item\label{step2} \textbf{Download phase:} Any failed node downloads $\beta_1$ symbols from each of $d$ helper nodes, respectively;
	
	\item\label{step3} \textbf{Cooperative  phase:} For any two failed nodes,  each transfers $\beta_2$ symbols to the other.
\end{enumerate}
Accordingly,  the repair bandwidth is $\gamma=h(d\beta_1+(h-1)\beta_2)$.

In \cite{Cooperative-recovery} and \cite{Cooperative-regenerating-codes},  the repair bandwidth of MDS array code $\mathcal{C}$ is proved to be lower bounded by
\begin{eqnarray}\label{Eqn_Lower_Bound}
\gamma\ge \frac{h(d+h-1)N}{d-k+h}.
\end{eqnarray}
Especially, the optimal repair bandwidth $ \frac{h(d+h-1)N}{d-k+h}$ in \eqref{Eqn_Lower_Bound} is achieved only for $\beta_1= \beta_2=\frac{N}{d-k+h}$.
In \cite{Cooperative-repair}, Ye and Barg introduced the first explicit construction with optimal cooperative repair property for sub-packetization level $N=((d-k)^{h-1}(d-k+h))^{n \choose h}$. Follow-up works committed to reduce the sub-packetization of MDS codes with the optimal cooperative repair property.  In \cite{Explicit-Constructions-of-Optimal-Access}, Zhang \textit{et al.}  proposed a code with the optimal access property while decreasing the sub-packetization to $(d-k+h)^{n \choose h}$. Remarkably, the recent work of Ye \cite{New-Constructions} dramatically lower the sub-packetization  to $(d-k+h)(d-k+1)^n$.

Actually, the construction in \cite{New-Constructions} is an extension of the Hadamard MSR code  in \cite{Hadamard-MSR} by space sharing its $d-k+h$ instances. Particularly, the key technique thereby is to pair the instances.  Inspired by \cite{New-Constructions}, in this paper we utilize not only the inter-instance pairing, but also the intra-instance pairing  inherited from the Hamming code, i.e., perfect code.
As a consequence, when $d=k+1$, we further decrease  the sub-packetization  to $2^n$ when $(h+1)|2^n$ and $(2\ell+1)2^n$ when $h+1=(2\ell+1)2^m$ for $ \ell\ge 1$.  That is to say,  the sub-packetization is $h + 1 $ times  or  $2^m$ times less than the one obtained in \cite{New-Constructions}.
Table \ref{comparison} illustrates the comparison. It should be noted that our  intra-instance pairing  only works for the case $d=k+1$ due to the nonexistence of  perfect code in the other cases \cite{T-K-Moon}.

\begin{table}[h]
	\begin{center}\caption{comparison with existing constructions of $(n,k)$ MSR codes with optimal cooperative repair property}\label{comparison}
		\renewcommand\arraystretch{1.2}
		\begin{tabular}{|c|c|c|}
			\hline
			& sub-packetization $N$  & helper nodes \\
			\hline
			Ye and Barg \cite{Cooperative-repair}& $((d-k)^{h-1}(d-k+h))^{n \choose h}$ & $k\le d \le n-h$  \\
			\hline
			Zhang \textit{et al.} \cite{Explicit-Constructions-of-Optimal-Access}& $(d-k+h)^{n \choose h}$ & $k\le d \le n-h$\\
             \hline
            Ye \cite{New-Constructions} & $(d-k+h)(d-k+1)^n$& $k\le d \le n-h$ \\
			\hline
			This paper &
			$\left\{\begin{array}{ll}
			     2^n, & \mathrm{if}~(h+1)|2^n\\
			    (2\ell+1)2^n, & \mathrm{if}~ h+1=(2\ell+1)2^m,  \ell\ge 1
			    \end{array}
			\right.$  &  $d=k+1$\\
			\hline
		\end{tabular}
	\end{center}
\end{table}

The remainder of this paper is organized as follows. In Section \ref{Preliminary}, some necessary preliminaries of Hadamard MSR code and Hamming code are reviewed. In Section \ref{basic repair},  a general repair principle is proposed. Then following the principle, the cooperative repair schemes are presented  in Sections \ref{Divisible} and \ref{Non_Divisible} for $(h+1)\mid 2^n$ and $(h+1)\nmid 2^n$, respectively. Finally, the concluding remark is drawn in Section \ref{Conclusion}.

\section{Preliminaries}\label{Preliminary}
In this section, we briefly review Hadamard MSR codes and Hamming codes. For ease of reading,  we firstly introduce some useful notation used throughout this  paper.

\begin{itemize}
\item Let $\mathbb{F}_q$ be a finite field with $q$ elements, where $q$ is a prime power.
\item Denote $\mathbf{a}\in\mathbb{F}_q^N$ as a vector of length $N$ over $\mathbb{F}_q$.
\item For two non-negative integers $a$ and $b$ with $a < b$, define $[a,b)$ and $[a,b]$ as two sets $\{a,a+1,\cdots,b-1\}$ and $\{a,a+1,\cdots,b\}$, respectively.
\item For any non-negative integer $a\in[0,2^n)$, let $(a_0,a_1,\cdots,a_{n-1})$ be its  binary representation  in  vector form of length $n$, i.e., $a=\sum_{i=0}^{n-1}2^ia_i,a_i\in\{0,1\}$. For convenience, we use both of them alternatively with a slight abuse of notation. Since the length $n$  is determined by the maximum value of $a$, we always specify the range of $a$ before using the binary vector representation.
\item Let $\mathbf{e}_{n,i}$ be a binary vector of length $n$ with only the $i$-th, $i\in[0,n)$ component being non-zero.
\item Let $\mathbf{b}$ and $\mathbf{c}$ be binary vectors of length $n$, denote $\mathbf{b}\oplus\mathbf{c}=(b_0\oplus c_0,b_1\oplus c_1,\cdots,b_{n-1}\oplus c_{n-1})$, where $\oplus$ is the addition operation modulo $2$.
\end{itemize}

\subsection{Hadamard MSR code}

Assume that the original data is of size $M=kN$. An $(n,k)$ MDS code partitions the data into $k$ parts and then encodes into $n$ parts $\mathbf{f}=[\mathbf{f}_0^\top,\mathbf{f}_1^\top,\cdots,\mathbf{f}_{n-1}^\top]^\top$ stored on $n$ nodes, where $\mathbf{f}_i=(f_{i,0},f_{i,1},\cdots,f_{i,N-1})^\top\in \mathbb{F}_q^N,i\in [0,n)$ is a column vector and $\top$ denotes the transpose operator.

 Let $N=2^n$.  The $(n,k)$ Hadamard MSR code is a class of MDS code defined by the following parity-check equations\cite{Hadamard-MSR, Tang-Hadamard}
 \begin{eqnarray*}\label{Parity-check equation}
 A_{t,0}\mathbf{f}_0+ A_{t,1}\mathbf{f}_1+\cdots+ A_{t,n-1}\mathbf{f}_{n-1}=0,\qquad t\in[0,r),
 \end{eqnarray*}
where $A_{t,i}$ is an $2^n\times 2^n$ nonsingular matrix over $\mathbb{F}_q$, called the \textit{parity matrix} of node $i\in[0,n)$
for the $t$-th parity-check equation. In matrix form, the structure
of $(n,k)$ MSR codes based on the above parity-check
equations can be rewritten as
\begin{eqnarray*}\label{Matrix form}
\setlength{\arraycolsep}{0.1pt}
\underbrace{\left(\begin{array}{ccccc}
A_{0,0} & A_{0,1} & \cdots  &   A_{0,n-1}      \\
A_{1,0} & A_{1,1} & \cdots  &   A_{1,n-1}      \\
\vdots  & \vdots  &  \ddots  &  \vdots          \\
A_{r-1,0} & A_{r-1,1} & \cdots  &   A_{r-1,n-1}      \\
\end{array}\right)}_{\mathrm{block ~matrix~}  A}
\left(\begin{array}{ccccc}
\mathbf{f}_0        \\
\mathbf{f}_1\\
\vdots\\
\mathbf{f}_{n-1}\\
\end{array}
\right)=0.
\end{eqnarray*}

Usually, $A$ is designed as a block Vandermonde matrix, i.e.,
\begin{eqnarray*}\label{A_i}
A_{t,i}=A_i^t,\qquad i\in[0,n), ~t\in[0,r),
\end{eqnarray*}
where $A_i,i\in[0,n)$ are $2^n \times 2^n$ nonsingular matrices. In particular, we use the convention $A_i^0=I$.

In fact, Hadamard  MSR  code can be characterized by the coding matrices below,
\begin{eqnarray}\label{Parity-matrix}
A_i=I_{2^{n-i-1}}\otimes \mathrm{blkdiag}(\lambda_{i,0}I_{2^{i}},\lambda_{i,1}I_{2^{i}}),\qquad i\in [0,n),
\end{eqnarray}
where $\otimes$ is Kronecker product, $\lambda_{i,0}, \lambda_{i,1}$ are two distinct elements in $\mathbb{F}_q(q\ge 2n)$.

Accordingly, the $t$-th parity-check equation is
\begin{eqnarray}\label{Eqn_Parity_eq}
\sum\limits_{i=0}^{n-1} \lambda_{i,a}^tf_{i,a}=0, \qquad a\in [0,2^n),~ t\in[0,r).
\end{eqnarray}
From \eqref{Parity-matrix}, it is clear that the diagonal element in row $a\in [0,2^n)$-th of $A_i,i\in [0,n)$ satisfies
\begin{eqnarray}\label{Eqn_lambda}
\lambda_{i,a}=\lambda_{i,a_i}.
\end{eqnarray}
Besides,  the MDS property of Hadamard MSR code further requires for $ i\ne j\in [0,n),a\ne b\in [0,2^n)$,
\begin{eqnarray}\label{Node_lambda}
\lambda_{i,a}\ne \lambda_{j,b}.
\end{eqnarray}

\begin{Example}\label{Example_1}
For $N=2^{14}$, let $\alpha$ be the primitive element of $\mathbb{F}_{29}$. The $(n=14,k=2)$ Hadamard MSR code has
the following coding matrices over $\mathbb{F}_{29}$
\begin{eqnarray*}
A_0&=&\mathrm{diag}(1, -1, 1, -1, 1, -1, 1,-1,\cdots),\\
A_1&=&\alpha \cdot\mathrm{diag}(1, 1, -1, -1, 1, 1, -1,-1, \cdots),\\
A_2&=&\alpha^2\cdot\mathrm{diag}(1, 1, 1, 1, -1, -1, -1,-1, \cdots),\\
 &\vdots&  \\
A_{13}&=&\alpha^{13}\cdot\mathrm{diag}(\underbrace{1, 1, \cdots, 1}_{2^{13}},  \underbrace{-1, -1,\cdots, -1}_{2^{13}})\\
\end{eqnarray*}
  according to \eqref{Parity-matrix} and \eqref{Eqn_lambda}.
\end{Example}

\begin{Remark}
It is known from  \cite{Hadamard-MSR} that the sub-packetization of  $(n,k)$ Hadamard MSR code  is $N=s^n$ with $s=d-k+1$. We focus on the case of $d=k+1$ helper nodes. Thus, only $N=2^n $ is considered in this paper.
\end{Remark}

\subsection{Hamming code}\label{Hamming code}

\begin{Definition}[\textbf{Perfect code} \cite{T-K-Moon}]
A $q$-ary $(n',k')$ code $\mathcal{C}$  is said to be a \textbf{perfect code} if
\begin{eqnarray*}\label{Perfect code}
		q^{n'-k'}= \sum_{i=0}^{\lfloor {d'-1\over 2}\rfloor}{n' \choose i}(q-1)^i,
\end{eqnarray*}
where  $d'$ is the minimum Hamming distance of code $\mathcal{C}$.
\end{Definition}

The sets of perfect codes are actually quite limited, where Hamming codes and Golay codes are the only nontrivial ones \cite{T-K-Moon}. To be specific, given an integer $m\ge 2$, the Hamming  code  $\mathcal{C}$ is a family of $(n'=2^m-1,k'=2^m-1-m)$  binary codes  with single-error-correcting ability, i.e., $d'=3$. Let $V_0=\{\mathbf{c}_0,\mathbf{c}_1,\cdots,\mathbf{c}_{2^{k'}-1}\}$, where $\mathbf{c}_j, 0\le j<2^{k'}$ is the codewords of $\mathcal{C}$.  Then, based on $V_0$, we can define $n'$ sets
\begin{eqnarray}\label{Basic group}
	V_i=\{\mathbf{c}_j\oplus \mathbf{e}_{n',i-1}:\mathbf{c}_j\in V_0,0\le j< 2^{k'}\}, \qquad i\in [1,n'],
\end{eqnarray}
where $\mathbf{e}_{n',i-1}$ is the binary vector of length $n'$ with only the $(i-1)$-th, $i\in[1,n']$ component being non-zero.

The following lemma is a direct consequence of the fact that the binary Hamming code is a perfect code for correcting a single error.

\begin{Lemma}\label{Lem_cover}
The sets $V_0$ and $V_{i},i\in[1,n']$ defined in \eqref{Basic group} can cover all the $2^{n'}$ vectors of length $n'$, i.e.,
\begin{eqnarray*}\label{cup cover}
V_0 \cup V_{1}\cup V_{2}\cup \cdots\cup V_{n'}=[0,2^{n'}).
\end{eqnarray*}
\end{Lemma}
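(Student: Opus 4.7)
The plan is to exploit directly the defining property of a perfect single-error-correcting code, namely that the radius-$1$ Hamming balls centered at the codewords of $\mathcal{C}$ partition the whole binary space $\{0,1\}^{n'}$. The Hamming code has parameters $(n',k')=(2^m-1,2^m-1-m)$ with minimum distance $d'=3$, so for every binary vector $\mathbf{v}$ of length $n'$ there exists a unique codeword $\mathbf{c}_j\in V_0$ within Hamming distance at most $1$ of $\mathbf{v}$. I will use this existence part only; uniqueness is not needed for the cover statement.

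First I would fix an arbitrary $\mathbf{v}\in[0,2^{n'})$ and split into two cases. In the case $d(\mathbf{v},\mathbf{c}_j)=0$, we have $\mathbf{v}=\mathbf{c}_j\in V_0$ and we are done. In the case $d(\mathbf{v},\mathbf{c}_j)=1$, the two vectors differ in exactly one coordinate, say position $i-1$ for some $i\in[1,n']$, which by definition of $\mathbf{e}_{n',i-1}$ means $\mathbf{v}=\mathbf{c}_j\oplus \mathbf{e}_{n',i-1}$, and so $\mathbf{v}\in V_i$ by \eqref{Basic group}. In either case $\mathbf{v}$ belongs to $V_0\cup V_1\cup\cdots\cup V_{n'}$, and since $\mathbf{v}$ was arbitrary this shows the union contains $[0,2^{n'})$; the reverse inclusion is trivial because every $V_i$ is a set of length-$n'$ binary vectors.

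As a consistency check I would verify the cardinalities: each $V_i$ has size $|V_0|=2^{k'}$ (translation by a fixed vector is a bijection), so the union has at most $(n'+1)2^{k'}=2^m\cdot 2^{2^m-1-m}=2^{n'}$ elements, which matches $|[0,2^{n'})|=2^{n'}$; combined with the cover, this in fact shows the union is a disjoint partition, a fact that may be useful later in the paper.

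There is no real obstacle here: the statement is essentially a restatement of the perfect-code property, and the only thing one must be a little careful about is the index shift between $\mathbf{e}_{n',i-1}$ and the label $V_i$, which is just the convention made in \eqref{Basic group}. Thus the entire argument is a one-line invocation of perfectness plus a two-case splitting on the Hamming distance.
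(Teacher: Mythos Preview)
Your proposal is correct and matches the paper's approach exactly: the paper states the lemma as ``a direct consequence of the fact that the binary Hamming code is a perfect code for correcting a single error'' and gives no further proof, so your argument is simply a fleshed-out version of that one-line invocation. The cardinality check you add (showing the cover is in fact a partition) is a nice bonus, though not required for the lemma as stated.
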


\begin{Example}\label{Hamming code example}
When $m=2$, $(3,1)$-Hamming code $\mathcal{C}$ with the following parity check matrix
\begin{eqnarray*}
H=\left(\begin{array}{cccccccc}
1 & 0 & 1 \\
0 & 1 & 1
\end{array}\right)
\end{eqnarray*}
has  two codewords $\mathbf{c}_0=(000)$ and $\mathbf{c}_1=(111)$.
Table \ref{Hamming code decoding} lists $V_0$ and $V_i,i\in[1,3]$ in \eqref{Basic group}.  In fact, it is just the standard array of $\mathcal{C}$ defined in \cite{T-K-Moon}.

\begin{table}[h]
	\begin{center}\caption{Set $V_{i}$, $i\in[0,3)$}\label{Hamming code decoding}
		\renewcommand\arraystretch{0.8}
		\begin{tabular}{|c|c|c|}
			\hline
			$V_{0}$& $0$ &  $7$ \\
			\hline
			$V_{1}$& $1$ &  $6$   \\
			\hline
			$V_{2}$& $2$ &  $5$   \\
			\hline
			$V_{3}$ &$4$ & $3$   \\
			\hline
		\end{tabular}
	\end{center}
\end{table}
\vspace{1mm}
\end{Example}

\section{The repair principle of Hadamard MSR codes with $d=k+1$}\label{basic repair}

Throughout this paper, it is supposed that  $h$ nodes of an $(n,k)$ Hadamard MSR code $\mathcal{C}$ fail, denoted as $\mathcal{E}=\{i_0,\cdots,i_{h-1}\}$. During repair process of the $h$ failed nodes, $d=k+1$ helper nodes are connected and thus $n-(k+1)-h=r-h-1$ nodes are unconnected,
denoted by $\mathcal{H}=\{j_0,\cdots,j_{k}\}\subseteq[0,n)\backslash \mathcal{E},|\mathcal{H}|=k+1$ and  $\mathcal{U}=[0,n)\backslash(\mathcal{E}\cup\mathcal{H})=\{z_0,\cdots,z_{r-h-2}\}$, respectively.

In this paper, we have the following \textbf{\textit{repair principle}} for the Hadamard MSR code with $h$ failed nodes.
\begin{itemize}
	\item  \textbf{\textit{Grouping}:} ~Divide indices of the stored symbols into $M$ groups according to some certain rules (See Sections \ref{Divisible} and  \ref{Non_Divisible} for the specific  rules), namely $S_0,\cdots,S_{M-1}$.

Denote  $S_g=\{S_g(0),\cdots,S_g(\frac{N}{M}-1)\}$, $0\le g<M$, where $S_g(v),v\in[0,\frac{N}{M})$ is called the $v$-th element of group $g$. Then, the stored symbol $\mathbf{f}_i=(f_{i,0},f_{i,1},\cdots,f_{i,N-1})^\top$ can be rewritten as $\mathbf{f}_i=(f_{i,S_g(v)})^\top_{g\in [0,M),v\in[0,\frac{N}{M})},i\in [0,n)$.
	
	\item  \textbf{\textit{Pairing}:} ~Given an index $g$,  pair it with a $g'\in[0,M)$ to repair node $i_u\in\mathcal{E}$ (Refer to Sections \ref{Divisible} and  \ref{Non_Divisible} for the specific pairing strategy).
	
	\item  \textbf{\textit{Downloading}:} ~Node $i_u$ downloads
	\begin{eqnarray}\label{Eqn_u_download}
		f_{j,S_g(v)}+f_{j,S_{g'}(v)},\qquad v\in[0,\frac{N}{M})
	\end{eqnarray}
from all the helper nodes $j\in\mathcal{H}$.
\end{itemize}

\begin{Theorem}\label{Thm_Repair Condition}
	According to \textbf{Pairing} of \textbf{the repair principle}, for the given $g,g'\in [0,M)$, by downloading $f_{j,S_g(v)}+f_{j,S_{g'}(v)}, v\in[0,\frac{N}{M}),j\in\mathcal{H}$, node $i_u\in \mathcal{E}$ can recover
	\begin{itemize}
		\item $f_{i_u,S_g(v)}$,$f_{i_u,S_{g'}(v)}$, $v\in[0,\frac{N}{M})$ for itself, and
		\item $f_{i,S_g(v)}+f_{i,S_{g'}(v)}$, $v\in[0,\frac{N}{M})$ for other failed nodes $i\in \mathcal{E}\backslash\{i_u\}$,
	\end{itemize}
if
\begin{numcases}{}
	\lambda_{i_u,S_g(v)}\neq \lambda_{i_u,S_{g'}(v)},\label{Condition-0}\\
	\lambda_{i,S_g(v)}=\lambda_{i,S_{g'}(v)},\qquad\forall i\in [0,n)\backslash\{i_u\}.\label{Condition-1}
\end{numcases}
\end{Theorem}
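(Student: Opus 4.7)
The plan is to add the two parity-check equations at $a=S_g(v)$ and $a=S_{g'}(v)$, exploit the pairing hypotheses to collapse all mixed sums except those involving $i_u$, and then invert a square Vandermonde system.

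First, for each fixed $v\in[0,N/M)$, I would instantiate the parity-check relation \eqref{Eqn_Parity_eq} at both $a=S_g(v)$ and $a=S_{g'}(v)$ and add them term by term. Writing $F_i(v)=f_{i,S_g(v)}+f_{i,S_{g'}(v)}$, hypothesis \eqref{Condition-1} makes the two coefficients of every node $i\neq i_u$ coincide, so that node's contribution collapses to $\lambda_{i,S_g(v)}^{\,t}\,F_i(v)$; the contribution of $i_u$ keeps its two distinct scalars. For every $t\in[0,r)$ this yields
\begin{equation*}
\sum_{i\in[0,n)\setminus\{i_u\}}\lambda_{i,S_g(v)}^{\,t}F_i(v)\;+\;\lambda_{i_u,S_g(v)}^{\,t}f_{i_u,S_g(v)}\;+\;\lambda_{i_u,S_{g'}(v)}^{\,t}f_{i_u,S_{g'}(v)}\;=\;0.
\end{equation*}

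Second, the terms $F_j(v)$ for $j\in\mathcal{H}$ are precisely the symbols downloaded in \eqref{Eqn_u_download} and can be moved to the right-hand side as known constants. The unknowns remaining on the left are then $F_i(v)$ for $i\in(\mathcal{E}\setminus\{i_u\})\cup\mathcal{U}$ together with the two scalars $f_{i_u,S_g(v)}$ and $f_{i_u,S_{g'}(v)}$, yielding exactly $(h-1)+(r-h-1)+2=r$ unknowns against the $r$ equations indexed by $t\in[0,r)$. The coefficient matrix is a Vandermonde matrix in the $r$ evaluation points
\begin{equation*}
\bigl\{\lambda_{i,S_g(v)}\;:\;i\in(\mathcal{E}\setminus\{i_u\})\cup\mathcal{U}\bigr\}\;\cup\;\bigl\{\lambda_{i_u,S_g(v)},\;\lambda_{i_u,S_{g'}(v)}\bigr\}.
\end{equation*}

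Finally, I would verify that these $r$ points are pairwise distinct. Any two points attached to different node indices are distinct by the MDS condition \eqref{Node_lambda}, which covers both the $r-2$ points of the first set among themselves and each of them against either $\lambda_{i_u,S_g(v)}$ or $\lambda_{i_u,S_{g'}(v)}$ (since in all such cases the node indices differ). The only remaining pair, $\lambda_{i_u,S_g(v)}$ and $\lambda_{i_u,S_{g'}(v)}$, is distinct by hypothesis \eqref{Condition-0}. The Vandermonde matrix is therefore invertible, and solving the system delivers $f_{i_u,S_g(v)}$ and $f_{i_u,S_{g'}(v)}$ for node $i_u$ itself, together with $F_i(v)=f_{i,S_g(v)}+f_{i,S_{g'}(v)}$ for every other failed node, as claimed. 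I do not expect a serious obstacle here; the one point that really drives the theorem is the bookkeeping that exactly matches unknowns to equations, and this is precisely what pins the scheme to the $d=k+1$ regime.
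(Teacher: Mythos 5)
Your proposal is correct and follows essentially the same route as the paper's own proof: add the two parity-check equations indexed by $S_g(v)$ and $S_{g'}(v)$, use \eqref{Condition-1} to merge all terms except those of $i_u$, treat the downloaded helper sums as known, and invert the resulting $r\times r$ Vandermonde system whose evaluation points are pairwise distinct by \eqref{Node_lambda} and \eqref{Condition-0}. The unknown count $(h-1)+(r-h-1)+2=r$ and the distinctness bookkeeping match the paper exactly, so no gap remains.
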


\begin{proof}
	It follows from \eqref{Eqn_Parity_eq} that for any $v\in [0,\frac{N}{M})$ and $t\in[0,r)$,
	\begin{eqnarray}\label{G_g}
		\begin{array}{l}
			\lambda_{i_u,S_g(v)}^tf_{i_u,S_g(v)}+
			\sum\limits_{i\in \mathcal{E}\backslash\{i_u\}} \lambda_{i,S_g(v)}^tf_{i,S_g(v)}+
			\sum\limits_{j\in\mathcal{H}}\lambda_{j,S_g(v)}^tf_{j,S_g(v)}
			+\sum\limits_{z\in\mathcal{U}}\lambda_{z,S_g(v)}^tf_{z,S_g(v)}=0
		\end{array}
	\end{eqnarray}
	and
	\begin{eqnarray}\label{G_g'}
		\begin{array}{l}
			\lambda_{i_u,S_{g'}(v)}^tf_{i_u,S_{g'}(v)}+
			\sum\limits_{i\in \mathcal{E}\backslash\{i_u\}} \lambda_{i,S_{g'}(v)}^tf_{i,S_{g'}(v)}+
			\sum\limits_{j\in\mathcal{H}}\lambda_{j,S_{g'}(v)}^tf_{j,S_{g'}(v)}
			+\sum\limits_{z\in\mathcal{U}}\lambda_{z,S_{g'}(v)}^tf_{z,S_{g'}(v)}=0.
		\end{array}
	\end{eqnarray}
	
	Substituting \eqref{Condition-0} and \eqref{Condition-1} to the summation of $\eqref{G_g}$ and $\eqref{G_g'}$, we get
	\begin{eqnarray}\label{Eqn_Gg+g'}
		&&\lambda_{i_u,S_g(v)}^tf_{i_u,S_g(v)}+\lambda_{i_u,S_{g'}(v)}^tf_{i_u,S_{g'}(v)}
		+\sum\limits_{i\in \mathcal{E}\backslash\{i_u\}} \lambda_{i,S_g(v)}^t(f_{i,S_g(v)}+f_{i,S_{g'}(v)})
		+\sum\limits_{z\in\mathcal{U}}\lambda_{z,S_g(v)}^t(f_{z,S_g(v)}+f_{z,S_{g'}(v)})\nonumber\\
		&=&-\sum\limits_{j\in\mathcal{H}}\lambda_{j,S_g(v)}^t(f_{j,S_g(v)}+f_{j,S_{g'}(v)}).
	\end{eqnarray}
	
	For a fixed $v\in [0,\frac{N}{M})$, since we have downloaded $f_{j,S_g(v)}+f_{j,S_{g'}(v)}$ in \eqref{Eqn_u_download} from all the helper nodes $j\in\mathcal{H}$, the data on the right-hand side of \eqref{Eqn_Gg+g'} is known. Then, there are $r$ unknowns $f_{i_u,S_g(v)},f_{i_u,S_{g'}(v)},f_{i,S_g(v)}+f_{i,S_{g'}(v)}, i\in \mathcal{E}\backslash\{i_u\}$ and
	$f_{z,S_g(v)}+f_{z,S_{g'}(v)}, z\in\mathcal{U}$ on the left-hand side of \eqref{Eqn_Gg+g'}. When $t$ enumerates $[0,r)$, by means of reorder of  columns, the $r\times r$ coefficient matrix on the left of the equation \eqref{Eqn_Gg+g'} can be written as the following Vandermonde matrix
\begin{eqnarray*}\label{Parity matrix}
		A=\left(\begin{array}{ccccccccccc}
			1  & 1 & \cdots  & 1 & 1 & \cdots   & 1 \\
			\lambda_{i_u,S_{g'}(v)} & \lambda_{i_0,S_g(v)} & \cdots & \lambda_{i_{h-1},S_g(v)} & \lambda_{z_0,S_g(v)}& \cdots & \lambda_{z_{r-h-2},S_g(v)} \\
			\vdots & \vdots & \ddots & \vdots & \vdots & \ddots & \vdots \\
			\lambda_{i_u,S_{g'}(v)}^{r-1} & \lambda_{i_0,S_g(v)}^{r-1} & \cdots  & \lambda_{i_{h-1},S_g(v)}^{r-1} & \lambda_{z_0,S_g(v)}^{r-1}& \cdots & \lambda_{z_{r-h-2},S_g(v)}^{r-1}  \\
		\end{array}\right)
	\end{eqnarray*}
which  is invertible  due to \eqref{Node_lambda} and \eqref{Condition-0}. Thus, with  $v$ ranging over $[0,\frac{N}{M})$, node $i_u$ gets the desired data as claimed.
\end{proof}

\begin{Example}\label{Example_3}
	Assume that nodes $0,1,2$ fail in Example \ref{Example_1}, the helper nodes and unconnected nodes are $[3,5]$ and $[6,13]$, respectively. Take  symbols
	$0$ and $1$ of the node $0$ as an example.
	
	For $t\in[0,11]$, from \eqref{Eqn_Parity_eq}  we know
	\begin{eqnarray*}\label{1_0}
		\begin{array}{l}	f_{0,0}+\alpha^tf_{1,0}+\alpha^{2t}f_{2,0}+\sum_{j=3}^{13}\alpha^{jt}f_{j,0}=0
		\end{array}
	\end{eqnarray*}
	and
	\begin{eqnarray*}\label{1_1}
		\begin{array}{l} (-1)^tf_{0,1}+\alpha^tf_{1,1}+\alpha^{2t}f_{2,1}+\sum_{j=3}^{13}\alpha^{jt}f_{j,1}=0,
		\end{array}
	\end{eqnarray*}
which gives
	\begin{eqnarray}\label{1_0+1_1}
		f_{0,0}+(-1)^tf_{0,1}+\alpha^t(f_{1,0}+f_{1,1})+\alpha^{2t}(f_{2,0}+f_{2,1})+\sum_{z=6}^{13}\alpha^{zt}(f_{z,0}+f_{z,1})
		&=&-\sum_{j=3}^{5}\alpha^{jt}(f_{j,0}+f_{j,1}).
	\end{eqnarray}
	
	Node 0 downloads $f_{j,0}+f_{j,1}$ from helper nodes $j\in[3,5]$, so the data on the right-hand side of equation \eqref{1_0+1_1} is known. We can get the reordered coefficient matrix on the left-hand side
	\begin{eqnarray*}
		A=\left(\begin{array}{ccccccccccc}
			1      & 1       & 1  & 1           & 1         & \cdots & 1 \\
			-1     & 1       & \alpha   & \alpha^2    & \alpha^6  & \cdots & \alpha^{13} \\
			\vdots & \vdots  & \vdots   & \vdots      & \vdots    & \ddots & \vdots \\
			-1     & 1       & \alpha^{11}  & \alpha^{22} & \alpha^{10}    & \cdots & \alpha^{3}  \\
		\end{array}\right).
	\end{eqnarray*}

Then, the equation \eqref{1_0+1_1} has unique solution because of $Rank(A)=12$. Hence, node $0$ can recover symbols $f_{0,0}$ and $f_{0,1}$ for itself and $f_{1,0}+f_{1,1},f_{2,0}+f_{2,1}$ for  nodes 1 and 2. 	
\end{Example}

Noting that the above repair principle was first proposed in  \cite{New-Constructions}, we generalize it here. Clearly, the key point of the repair  principle is how to make group and pair such that the conditions \eqref{Condition-0} and \eqref{Condition-1} can be satisfied. In \cite{New-Constructions}, grouping and pairing are based on the instances generated by space sharing technique.  Whereas, we will give different method based on Hamming codes in the next two sections, which may result in smaller sub-packetization.

\section{Optimal cooperative repair scheme for Hadamard MSR codes with $(h+1)|2^n$}\label{Divisible}

In this section, we propose an optimal  cooperative repair scheme for $(n,k)$ Hadamard MSR code $\mathcal{C}$ with sub-packetization $N=2^n$ when the number of failed nodes satisfies $h\ge 2$ and $(h+1)|2^n$. In this case, $h=2^m-1$ for  a positive integer $2\le m \le \log_2(n-k+1)$. $\mathcal{E}=\{i_0,\cdots,i_{h-1}\}$ and $\mathcal{H}=\{j_0,\cdots,j_{k}\}\subseteq[0,n)\backslash \mathcal{E}$ are the sets of $h$ failed nodes and $k+1$ helper nodes, respectively.

\textbf{\textit{Grouping}:} ~Following \textbf{repair principle} in the last section, we first divide the symbols at each node of Hadamard MSR code $\mathcal{C}$ into $M=2^h$ groups, each group having $2^{n-h}$ symbols. For given $g=(g_0,\cdots,g_{h-1})\in[0,2^h)$, define an ordered indices set as
\begin{eqnarray}\label{Eqn_S_g_11}
	S_g &=&\{a:a_{i_u}=g_u,a=(a_0,\cdots, a_{n-1})\in[0,2^n),u\in[0,h)\}\nonumber\\
	&=&\{S_g(0),\cdots,S_g(2^{n-h}-1)\},
\end{eqnarray}
where
\begin{eqnarray}\label{Eqn_S_g_12}
	S_g(0)<S_g(1)<\cdots<S_g(2^{n-h}-1).
\end{eqnarray}

Consequently, we have $\mathbf{f}_i=(f_{i,S_g(v)})^\top_{g\in [0,2^h),v\in[0,2^{n-h})}$, $i\in[0,n)$.

\textbf{\textit{Pairing}:} ~Define $V_0$ and $V_i, i\in [1,h]$ by applying $(h=2^m-1, 2^m-m-1)$ Hamming code to  \eqref{Basic group} in place of $n'=h$ and
$k'=2^m-m-1$. For $g=(g_0,\cdots,g_{h-1})\in V_0$, pair it with the index $g'=g\oplus \mathbf{e}_{h,u}\in V_{u+1}$ to repair
node $i_u\in \mathcal{E}$.

\begin{Lemma}\label{Lem_Inter-groups}
	Given a failed node $i_u,u\in[0,h)$, for any $g,g'\in [0,2^h)$, \eqref{Condition-0} and \eqref{Condition-1} are satisfied if $g'=g\oplus \mathbf{e}_{h,u}$.
\end{Lemma}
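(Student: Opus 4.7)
The plan is to reduce everything to a single combinatorial observation about the sets $S_g$ defined in \eqref{Eqn_S_g_11}: the groups $S_g$ and $S_{g'}$ with $g'=g\oplus\mathbf{e}_{h,u}$ are related element-wise by flipping exactly the $i_u$-th bit, i.e.\ $S_{g'}(v)=S_g(v)\oplus\mathbf{e}_{n,i_u}$ for every $v\in[0,2^{n-h})$. Once this is established, conditions \eqref{Condition-0} and \eqref{Condition-1} drop out immediately from \eqref{Eqn_lambda}: for every coordinate $i\neq i_u$ the bits $(S_g(v))_i$ and $(S_{g'}(v))_i$ coincide, so $\lambda_{i,S_g(v)}=\lambda_{i,S_{g'}(v)}$; whereas at coordinate $i_u$ the bits differ, giving $\lambda_{i_u,S_g(v)}=\lambda_{i_u,g_u}\neq\lambda_{i_u,g_u\oplus 1}=\lambda_{i_u,S_{g'}(v)}$ since $\lambda_{i_u,0}\neq\lambda_{i_u,1}$ by the definition of the parity matrices in \eqref{Parity-matrix}.

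To prove the bit-flip correspondence, I would argue as follows. By definition \eqref{Eqn_S_g_11}, both $S_g$ and $S_{g'}$ are obtained by fixing the $h$ coordinates indexed by $\mathcal{E}$ and letting the remaining $n-h$ ``free'' coordinates range over $\{0,1\}$. The only difference between the two defining constraints is at coordinate $i_u$, where one group has $a_{i_u}=g_u$ and the other has $a_{i_u}=g_u\oplus 1$. Hence the map $\varphi\colon S_g\to S_{g'}$ given by $\varphi(a)=a\oplus\mathbf{e}_{n,i_u}$ (equivalently, $a\mapsto a+2^{i_u}$ if $g_u=0$, or $a\mapsto a-2^{i_u}$ if $g_u=1$) is a well-defined bijection. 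Since adding or subtracting the constant $2^{i_u}$ is strictly order-preserving on $\mathbb{Z}$, $\varphi$ also preserves the order inherited from $[0,2^n)$. Combined with the ordering convention \eqref{Eqn_S_g_12}, this forces $\varphi(S_g(v))=S_{g'}(v)$ for every $v$, which is precisely the claimed identity.

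The only subtlety — and the one point worth double-checking — is the order-preservation step: one has to confirm that shifting every element of $S_g$ by the fixed amount $\pm 2^{i_u}$ really does respect the ranking used in \eqref{Eqn_S_g_12}, which is immediate because strict monotonicity of $x\mapsto x\pm c$ on $\mathbb{Z}$ is independent of the subset it is restricted to. Everything else in the argument is a direct unpacking of definitions, so this is where all of the structural content of the lemma sits. With the bit-flip identity in hand, conditions \eqref{Condition-0} and \eqref{Condition-1} follow in one line each via \eqref{Eqn_lambda}, completing the proof.
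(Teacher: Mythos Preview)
Your proposal is correct and follows essentially the same route as the paper: both arguments show that $S_{g'}$ is the shift of $S_g$ by $\pm 2^{i_u}$, use order-preservation of this shift together with the ordering convention \eqref{Eqn_S_g_12} to conclude $S_{g'}(v)=S_g(v)\oplus\mathbf{e}_{n,i_u}$, and then read off \eqref{Condition-0} and \eqref{Condition-1} from \eqref{Eqn_lambda}. Your write-up is a bit more explicit about why the order is preserved, but the underlying idea is identical.
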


\begin{proof}
	For any $v\in[0,2^{n-h})$, let $a=S_g(v)\in [0,2^n), b=S_{g'}(v)\in [0,2^n)$. From \eqref{Eqn_S_g_11}, we know
	$$a_{i_u}=g_u,b_{i_u}=g'_u.$$
	When $g'=g\oplus \mathbf{e}_{h,u}$, we can obtain
	$$g'_u=g_u\oplus 1.$$
	That is,
	$$b_{i_u}=a_{i_u}\oplus 1$$
and \begin{equation*}
S_{g'}=
\begin{cases}
S_g+2^{i_u}=\{\tilde{a}+2^{i_u}:\tilde a\in S_g\} &\qquad g_u=0,\\
S_g-2^{i_u}=\{\tilde{a}-2^{i_u}:\tilde a\in S_g\} &\qquad g_u=1.\\
\end{cases}
\end{equation*}
	
	Associated with the ordering of \eqref{Eqn_S_g_12}, for a given $v\in[0,2^{n-h})$,
$$b=S_{g'}(v)=\left\{\begin{array}{ll}
	S_g(v)+2^{i_u}=a+2^{i_u}, & \mathrm{if}~a_{i_u}=0,\\
	S_g(v)-2^{i_u}=a-2^{i_u}, & \mathrm{if}~a_{i_u}=1,
\end{array}
\right.$$
which means
	\begin{eqnarray*}
		b_{i_u}&=&a_{i_u}\oplus 1,\\
		b_{i}&=&a_{i}, \qquad i\in [0,n)\setminus\{i_u\}.
	\end{eqnarray*}
The preceding equations imply \eqref{Condition-0} and \eqref{Condition-1} by \eqref{Eqn_lambda}, which completes the proof.
\end{proof}

When $(h+1)|2^n$,  the cooperative repair scheme of $h$ failed nodes is given as follows.

\textbf{Download  phase:} ~For any $g\in V_0$ and $g'=g\oplus \mathbf{e}_{h,u}$, failed node $i_u\in\mathcal{E}$ downloads $f_{j,S_g(v)}+f_{j,S_{g'}(v)}, v\in 2^{n-h}$ defined in \eqref{Eqn_u_download} from helper nodes $j\in \mathcal{H}$ to recover \begin{eqnarray}\label{Eqn_iu_recover_11}
	\{f_{i_u,S_g(v)},f_{i_u,S_{g\oplus \mathbf{e}_{h,u}}(v)}:g\in V_0,v\in[0,2^{n-h})\}
\end{eqnarray}
for itself and
\begin{eqnarray}\label{Eqn_iu_recover_12}
	\{f_{i,S_g(v)}+f_{i,S_{g\oplus \mathbf{e}_{h,u}}(v)}:g\in V_0,v\in[0,2^{n-h})\}
\end{eqnarray}
for other failed nodes $i\in \mathcal{E}\backslash\{i_u\}$ according to Lemma  \ref{Lem_Inter-groups} and Theorem  \ref{Thm_Repair Condition}.

\vspace{3mm}

\textbf{Cooperative phase:} ~Other failed node $i_{\bar{u}},\bar{u}\in [0,h)\setminus\{u\}$ transfers $\{f_{i_u,S_g(v)}+f_{i_u,S_{g\oplus\mathbf{e}_{h,\bar{u}}}(v)}:g\in V_0,
v\in[0,2^{n-h})\}$  recovered in \eqref{Eqn_iu_recover_12} at download phase to node $i_u$.

Node $i_u$ utilizes its own data $\{f_{i_u,S_g(v)},g\in V_0,v\in[0,2^{n-h})\}$ recovered in \eqref{Eqn_iu_recover_11}  to solve
\begin{eqnarray}\label{Eqn_iu_recover_13}
	&&\{f_{i_u,S_{g\oplus\mathbf{e}_{h,\bar{u}}}(v)}:g\in V_0,
	v\in[0,2^{n-h}),\qquad \bar{u}\in [0,h)\setminus\{u\}\}
\end{eqnarray}
from $\{f_{i_u,S_g(v)}+f_{i_u,S_{g\oplus\mathbf{e}_{h,\bar{u}}}(v)}:g\in V_0,v\in[0,2^{n-h}),\bar{u}\in [0,h)\setminus\{u\}\}$ .

Combining \eqref{Eqn_iu_recover_11} and \eqref{Eqn_iu_recover_13},  node $i_u$ obtains
\begin{eqnarray}\label{Eqn_iu_recover_14}
	\{f_{i_u,S_g(v)}, f_{i_u,S_{g\oplus\mathbf{e}_{h,i}}(v)}:g\in V_0,i\in [0,h),
	v\in[0,2^{n-h})\}.
\end{eqnarray}

From \eqref{Basic group}, $\{g\oplus\mathbf{e}_{h,i}: g\in V_0\}$ enumerates $V_1,\cdots, V_h$ with  $i$ ranging over $[0,h)$.
According to Lemma
\ref{Lem_cover},
\begin{eqnarray*}\label{cup cover}
V_0 \cup V_{1}\cup V_{2}\cup \cdots\cup V_{h}=[0,2^h).
\end{eqnarray*}
Then,  as $v$ enumerating $[0,2^{n-h})$,  by \eqref{Eqn_S_g_11} we have
\begin{eqnarray}\label{Eqn_iu_rang}
	\bigcup_{g\in V_0,
		v\in[0,2^{n-h}),i\in [0,h)} \{S_g(v),S_{g\oplus\mathbf{e}_{h,i}}(v)\} = [0,2^n).
\end{eqnarray}
Thus, associated with \eqref{Eqn_iu_rang}, equation \eqref{Eqn_iu_recover_14} indicates that all data $\{f_{i_u,0},\cdots,f_{i_u,2^n-1}\}$ of the failed node $i_u\in\mathcal{E}$ has been recovered.

During the download  phase, each failed node downloads $2^{h-m}\cdot 2^{n-h}=2^{n-m}=N/(h+1)$ symbols from each of the $k+1$ helper nodes, i.e.,  the repair bandwidth at this phase is
\begin{eqnarray*}
	\gamma_1 &= h(k+1)\cdot{N\over h+1}.
\end{eqnarray*}

Next, during the cooperative phase, each failed node accesses $2^{h-m}\cdot 2^{n-h}=2^{n-m}=N/(h+1)$ symbols from other $h-1$ failed nodes, i.e., the repair bandwidth at this phase is
\begin{eqnarray*}\gamma_2 &= h(h-1)\cdot{N\over h+1}.
\end{eqnarray*}
Totally, the repair bandwidth is
\begin{eqnarray*}\label{Eqn_Repair Bandwidth1}
	\gamma &= & \gamma_1+\gamma_2 \nonumber\\
	&=&\frac{h(k+h)\times N}{h+1}
\end{eqnarray*}
which attains the optimal repair bandwidth according to \eqref{Eqn_Lower_Bound}.

This concludes the property of our repair scheme as well as the proof of the following Theorem \ref{Thm_Repair_1}.
\begin{Theorem}\label{Thm_Repair_1}
	When $(h+1)|2^n$, any $h\ge 2 $ failed nodes of $(n,k)$ Hadamard MSR code $\mathcal{C}$ with sub-packetization $N=2^n$ can be  optimally cooperative repaired with $d=k+1$ helper nodes.
\end{Theorem}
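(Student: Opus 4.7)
My plan is to make explicit the repair scheme already set up in the preceding paragraphs of this section and then check that it satisfies all three ingredients of the repair principle of Section \ref{basic repair}. First, I would partition the $N=2^n$ coordinates into $M=2^h$ groups $S_g$ indexed by $g\in[0,2^h)$, where $S_g$ collects those $a\in[0,2^n)$ whose binary representation satisfies $a_{i_u}=g_u$ at each failed coordinate $i_u$, as in \eqref{Eqn_S_g_11}. The key feature of this grouping, inherited from \eqref{Eqn_lambda}, is that for any $i\in[0,n)\setminus\mathcal{E}$, $\lambda_{i,a}$ is constant on each $S_g$, while at a failed position $i_u$ it is determined purely by $g_u$.

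Next, since the divisibility condition $(h+1)\mid 2^n$ forces $h=2^m-1$, I would invoke the binary $(h,2^m-m-1)$ Hamming code to build the cosets $V_0,V_1,\ldots,V_h$ of $[0,2^h)$ as in \eqref{Basic group}. For each failed node $i_u$, I would then pair each $g\in V_0$ with $g'=g\oplus\mathbf{e}_{h,u}\in V_{u+1}$ and have $i_u$ download from every helper $j\in\mathcal{H}$ the $N/(h+1)$ sums $f_{j,S_g(v)}+f_{j,S_{g'}(v)}$ for $v\in[0,2^{n-h})$.

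The verification then splits into three pieces. First, Lemma \ref{Lem_Inter-groups} guarantees that conditions \eqref{Condition-0} and \eqref{Condition-1} hold for every such pair, so Theorem \ref{Thm_Repair Condition} lets $i_u$ recover both its own symbols on $S_g\cup S_{g'}$ and the pairwise sums $f_{i,S_g(v)}+f_{i,S_{g'}(v)}$ for every other failed node $i$. Second, in the cooperative phase, each other failed node $i_{\bar u}$ forwards to $i_u$ the sums it has computed for $i_u$ under its own pairing, namely $f_{i_u,S_g(v)}+f_{i_u,S_{g\oplus\mathbf{e}_{h,\bar u}}(v)}$; combined with the own-node data $\{f_{i_u,S_g(v)}:g\in V_0\}$ that $i_u$ already holds, these sums can be solved to extract $f_{i_u,S_{g\oplus\mathbf{e}_{h,\bar u}}(v)}$ for all $\bar u\in[0,h)\setminus\{u\}$, $g\in V_0$, $v\in[0,2^{n-h})$. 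Third, I would invoke Lemma \ref{Lem_cover} to conclude $V_0\cup V_1\cup\cdots\cup V_h=[0,2^h)$, so that as $(g,\bar u,v)$ ranges over its indices the set $\{S_g(v),S_{g\oplus\mathbf{e}_{h,\bar u}}(v)\}$ exhausts $[0,2^n)$ and every coordinate of $i_u$ is recovered.

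Finally, I would tally the bandwidth: the download phase costs $|V_0|\cdot 2^{n-h}=2^{h-m}\cdot 2^{n-h}=N/(h+1)$ symbols per helper per failed node, and the cooperative phase contributes an equal $N/(h+1)$ per ordered pair of failed nodes, yielding total bandwidth $h(k+1)N/(h+1)+h(h-1)N/(h+1)=h(k+h)N/(h+1)$, which matches the lower bound \eqref{Eqn_Lower_Bound} with $d=k+1$. The step I expect to be the main obstacle, and the one that actually forces the divisibility hypothesis, is the coverage argument: without the perfect-code property ensuring that $V_0,V_1,\ldots,V_h$ tile $[0,2^h)$, some coordinates of $i_u$ would be left outside the union of recovered index sets, and the scheme would fail to regenerate the failed node completely. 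This is precisely where the assumption $(h+1)\mid 2^n$, equivalently $h=2^m-1$, becomes indispensable, and it is the reason the construction does not extend beyond Hamming-code parameters.
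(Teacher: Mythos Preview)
Your proposal is correct and follows essentially the same approach as the paper: the grouping by \eqref{Eqn_S_g_11}, the Hamming-code pairing $g\mapsto g\oplus\mathbf{e}_{h,u}$, the application of Lemma \ref{Lem_Inter-groups} and Theorem \ref{Thm_Repair Condition} in the download phase, the cooperative phase using the already-recovered $\{f_{i_u,S_g(v)}:g\in V_0\}$, the coverage argument via Lemma \ref{Lem_cover}, and the final bandwidth tally all mirror the paper's construction in Section \ref{Divisible}. Your closing remark explaining why the perfect-code property (and hence the hypothesis $(h+1)\mid 2^n$) is indispensable for the coverage step is an accurate diagnosis of where the divisibility assumption enters.
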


\begin{Remark}
For $(h+1)|2^n$, the  optimal cooperative repair scheme in  \cite{New-Constructions} requires  the sub-packetization $(h+1)N$. In contrast, the sub-packetization in Theorem \ref{Thm_Repair_1} is  $h+1=2^m$ times less.
\end{Remark}

\begin{Example} Continued with Example  \ref{Example_3}. Firstly, divide $N=2^{14}$ symbols into $M=8$ groups as follows, each group having $2^{11}$ symbols.
		\begin{eqnarray*}
			S_0&=&\{a:a_0=0,a_1=0,a_2=0,a\in[0,2^{14})\},\\
			S_1&=&\{a:a_0=1,a_1=0,a_2=0,a\in[0,2^{14})\},\\
			&\vdots &\\
			S_7&=&\{a:a_0=1,a_1=1,a_2=1,a\in[0,2^{14})\}.
		\end{eqnarray*}	

The binary representation  of these groups can be regarded as the codewords of $(3,1)$-Hamming code, so they can be partitioned into $V_0,V_1,V_2$ and $V_3$ as given in  Example \ref{Hamming code example}.

The repairing processes for failed nodes $\{0,1,2\}$ are explained in Figure \ref{fig:repairing}. Tables  \ref{Divisible-group_1} and  \ref{Divisible-group_2} illustrate the downloaded and recovered data at download phase and cooperative phase, respectively. The pairing $(V_0,V_j),j\in[1,3]$ means that all the groups $g\in V_0$ are paired with $g'=g\oplus \mathbf{e}_{3,j-1}\in V_j$, one by one, to repair node $j-1$.
\begin{figure*}[tb]
  \centering
\tikzsetnextfilename{2-butterflies}
\scalefont{0.5}
\begin{tikzpicture}[scale =1]
    \node[mybox] (nodein1+) { node 0} ;
	\node[mybox, above =20pt  of nodein1+] (nodein2+) { node 1} ;
    \node[mybox, above =20pt of nodein2+] (nodein3+) { node 2} ;
    \node[mybox, above =20pt of nodein3+] (nodein4+) { node 3} ;
    \node[mybox, above =20pt of nodein4+] (nodein5+) { node 4} ;
    \node[mybox, above =20pt of nodein5+] (nodein6+) { node 5} ;

    \node[mybox, right =80 pt of nodein3+] (node1+) {$\ \ $} ;

	\node[mybox, right =80pt  of nodein4+] (node2+) {$\ \ $} ;
    \node[mybox, right =80pt of nodein5+] (node3+) {$\ \ $} ;
    \node[mybox, right =80 pt of node1+] (node1) {$\ \ $} ;
	\node[mybox, right =80pt  of node2+] (node2) {$\ \ $} ;
    \node[mybox, right =80pt of node3+] (node3) {$\ \ $} ;
    \node[mybox, right =90 pt of node1] (node1new) { node 0} ;
	\node[mybox, right =90pt  of node2] (node2new) { node 1} ;
    \node[mybox, right =90pt of node3] (node3new) { node 2} ;
    \node[rectangle,above =3pt of nodein6+](text1){$\vdots$};
    \node[mybox, above =3pt of text1] (nodein14+) { node 13} ;

	\path[black,->,very thick,auto]
          (nodein4+.east) edge [bend right] node {$\color{red}D_{3,2}$} (node3+.west)
          (nodein5+.east) edge  node {${\color{red}D_{4,2}}$} (node3+.west)
          (nodein6+.east) edge [bend left] node {$\color{red}D_{5,2}$} (node3+.west)
          (node1.east) edge [bend right] node {} (node3.east)
          (node2.east) edge [bend right] node {$\color{red}CO_{1,2}$} (node3.east)
          (8.4,3) edge node {} (node1.east)
          (8.4,2.7) edge node {} (node1.east)
          (8.4,4.5) edge node {} (node2.east)
          (8.4,4.2) edge node {} (node2.east)
          (2.4,3.5) edge node {} (node1+.west)
          (2.4,3) edge node {} (node1+.west)
          (2.4,2.5) edge node {} (node1+.west)
          (2.4,5) edge node {} (node2+.west)
          (2.4,4.5) edge node {} (node2+.west)
          (2.4,4) edge node {} (node2+.west);
    \draw[pattern= dots] (6.75,3.96) rectangle (7.43,4.3);
    \draw[pattern= dots] (6.75,5.4) rectangle (7.43,5.74);
    \draw[pattern= dots] (6.75,2.52) rectangle (7.43,2.86);
	\draw [|-|, very thick] (2.8,2.5) arc  (240:120:15pt);
    \draw node[] at (2.8,2.3) {$\color{red}D_{0}$};

    \draw node[] at (3.2,1.5) {Download Phase};

    \draw [|-|, very thick] (2.8,4) arc  (240:120:15pt);
    \draw node[] at (2.8,3.8) {$\color{red}D_{1}$};

    \draw [|-|, very thick] (8.2,4) arc  (-40:40:15pt);
    \draw node[] at (8.2,3.8) {$\color{red}CO_{1}$};
    \draw [|-|, very thick] (8.2,2.5) arc  (-40:40:15pt);
    \draw node[] at (8.2,2.3) {$\color{red}CO_{0}$};

   \draw [->,thick,dotted](4.4,4.4) -- (6.4,4.4);
   \draw [->,thick,dotted](8.4,4.4) -- (10.4,4.4);
   \draw node[] at (5.4,4.1) {recovering};
   \draw node[] at (9.4,4.1) {recovering};
   \draw node[] at (7.4,3.5) {$\color{red}CO_{0,2}$};
   \draw node[] at (7.4,1.5) {Cooperative Phase};
   \draw [-,thick](0.25,0.25) -- (-0.25,-0.25);
   \draw [-,thick](-0.25,0.25) -- (0.25,-0.25);
   \draw [-,thick](0.25,1.65) -- (-0.25,1.15);
   \draw [-,thick](-0.25,1.65) -- (0.25,1.15);
   \draw [-,thick](0.25,3.1) -- (-0.25,2.6);
   \draw [-,thick](-0.25,3.1) -- (0.25,2.6);
\end{tikzpicture}

  \caption{Repairing processes for erasures, where $D_2=\{D_{3,2},D_{4,2},D_{5,2}\}$, $CO_2=\{CO_{0,2},CO_{1,2}\}$, and $D_i$, $CO_i$ denote the data downloaded for failed nodes $0\leq i\leq 2$ in download phase and cooperative phase, respectively. For more specific data downloaded and recovered in each phase, refer to Tables \ref{Divisible-group_1} and
   \ref{Divisible-group_2}.}
  \label{fig:repairing}
\end{figure*}
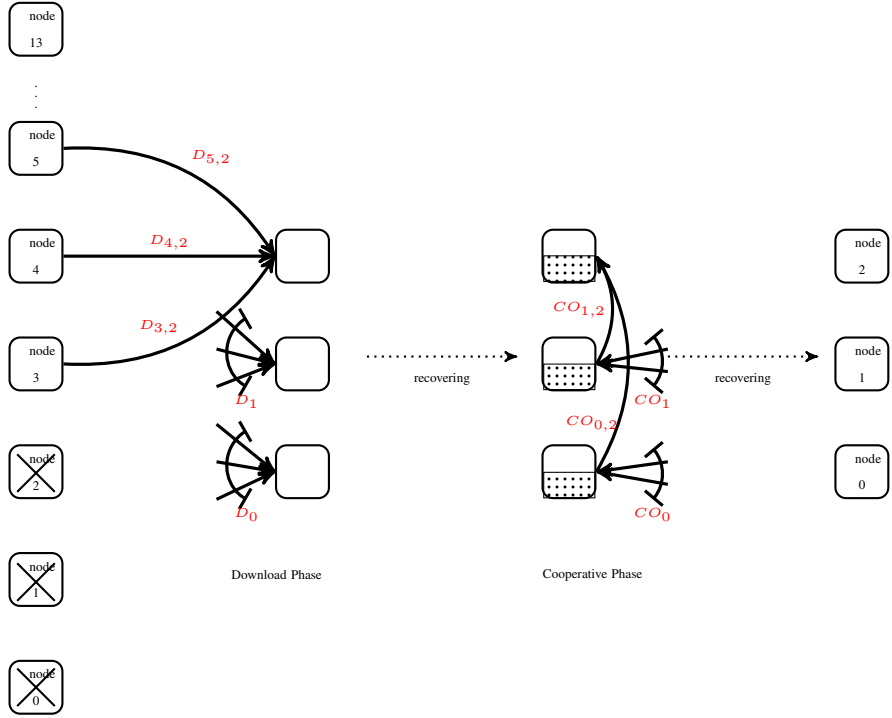

	\begin{table}[h]
		\begin{center}
			\caption{Download phase of repairing failed nodes $\{0,1,2\}$, where $v\in[0,7]$}
			\label{Divisible-group_1}
			\setlength{\tabcolsep}{8pt}
			\begin{tabular}{|c|c|c|c|c|c|c|c|}
				\hline
				Failed nodes & 0 & 1 & 2 \\
				\hline
				Pairing &  $(V_0,V_1)$ & $(V_0,V_2)$ & $(V_0,V_3)$\\
				\hline
				\multirow{2}{*}{Download} &$ f_{j,S_0(v)}+f_{j,S_1(v)}$ &$ f_{j,S_0(v)}+f_{j,S_2(v)}$ &$ f_{j,S_0(v)}+f_{j,S_4(v)}$ \\
				&$ f_{j,S_7(v)}+f_{j,S_6(v)}$ &$ f_{j,S_7(v)}+f_{j,S_5(v)}$ &$ f_{j,S_7(v)}+f_{j,S_3(v)}$\\
				\hline
				\multirow{6}{*}{Repair} &$ f_{0,S_0(v)},f_{0,S_1(v)}$ &$ f_{1,S_0(v)},f_{1,S_2(v)}$ &$ f_{2,S_0(v)},f_{2,S_4(v)}$ \\
				&$ f_{0,S_7(v)},f_{0,S_6(v)}$ &$ f_{1,S_7(v)},f_{1,S_5(v)}$ &$ f_{2,S_7(v)},f_{2,S_3(v)}$\\
				&$ f_{1,S_0(v)}+f_{1,S_1(v)}$ &$ f_{0,S_0(v)}+f_{0,S_2(v)}$ &$ f_{0,S_0(v)}+f_{0,S_4(v)}$ \\
				&$ f_{1,S_7(v)}+f_{1,S_6(v)}$ &$ f_{0,S_7(v)}+f_{0,S_5(v)}$ &$ f_{0,S_7(v)}+f_{0,S_3(v)}$\\
				&$ f_{2,S_0(v)}+f_{2,S_1(v)}$ &$ f_{2,S_0(v)}+f_{2,S_2(v)}$ &$ f_{1,S_0(v)}+f_{1,S_4(v)}$ \\
				&$ f_{2,S_7(v)}+f_{2,S_6(v)}$ &$ f_{2,S_7(v)}+f_{2,S_5(v)}$ &$ f_{1,S_7(v)}+f_{1,S_3(v)}$\\
				\hline
			\end{tabular}
		\end{center}
	\end{table}
	
	\begin{table}[h]
		\begin{center}
			\caption{Cooperative phase of repairing failed nodes $\{0,1,2\}$, where $v\in[0,7]$}
			\label{Divisible-group_2}
			\setlength{\tabcolsep}{8pt}
			\begin{tabular}{|c|c|c|c|c|c|c|c|}
				\hline
				Failed nodes & 0 & 1 & 2 \\
				\hline
				Transfer &  $1,2$ &  $0,2$&  $0,1$		\\
				\hline
				\multirow{4}{*}{Cooperative} &$ f_{0,S_0(v)}+f_{0,S_2(v)}$ & $f_{1,S_0(v)}+f_{1,S_1(v)}$ & $ f_{2,S_0(v)}+f_{2,S_1(v)}$ \\
				&$ f_{0,S_7(v)}+f_{0,S_5(v)}$ & $ f_{1,S_7(v)}+f_{1,S_6(v)}$ & $ f_{2,S_7(v)}+f_{2,S_6(v)}$\\
				&$ f_{0,S_0(v)}+f_{0,S_4(v)}$ & $ f_{1,S_0(v)}+f_{1,S_4(v)}$ & $ f_{2,S_0(v)}+f_{2,S_2(v)}$ \\
				&$ f_{0,S_7(v)}+f_{0,S_3(v)}$ & $ f_{1,S_7(v)}+f_{1,S_3(v)}$ & $ f_{2,S_7(v)}+f_{2,S_5(v)}$ \\
				\hline
				\multirow{4}{*}{Repair} &$f_{0,S_2(v)}$ & $f_{1,S_1(v)}$ & $f_{2,S_1(v)}$ \\
				&$f_{0,S_5(v)}$ & $f_{1,S_6(v)}$ & $f_{2,S_6(v)}$\\
				&$f_{0,S_4(v)}$ & $f_{1,S_4(v)}$ & $f_{2,S_2(v)}$ \\
				&$f_{0,S_3(v)}$ & $f_{1,S_3(v)}$ & $ f_{2,S_5(v)}$ \\
				\hline
			\end{tabular}
		\end{center}
	\end{table}	
\end{Example}

\section{Optimal cooperative repair scheme for Hadamard MSR codes with $(h+1)\nmid2^n$}\label{Non_Divisible}

When $(h+1)\nmid2^n$, set $h=(2\ell+1)2^m-1\le n-k$ for some integers $m\ge 0$ and $\ell\ge 1$. Let $h'=2^m-1$. $\mathcal{E}=\{i_0,\cdots,i_{h-1}\}$ and $\mathcal{H}=\{j_0,\cdots,j_{k}\}\subseteq[0,n)\backslash \mathcal{E}$ are the sets of $h$ failed nodes and $k+1$ helper nodes, respectively. Before presenting feasible pairing method such that \eqref{Condition-0} and \eqref{Condition-1}  are satisfied, we first expand the sub-packetization of  $(n,k)$ Hadamard MSR code $\mathcal{C}$ from $2^n$ to $(2\ell+1)2^n$. Then, we  propose the cooperative repair scheme for the $h$ failed nodes.

\subsection{$(n,k)$ Hadamard MSR code $\mathcal{C}$ with sub-packetization  $N=(2\ell+1)2^n$}\label{Construction}

In the original $(n,k)$ Hadamard MSR code $\mathcal{C}$ with sub-packetization  $2^n$, each node $i\in[0,n)$ stores a column vector  $\mathbf{f}_i=(f_{i,0},f_{i,1},\cdots,$ $f_{i,2^n-1})^\top$ of length $2^n$. Generate $2\ell+1$ instances of $\mathcal{C}$, whose  column vectors are denoted by $\mathbf{f}_i^{(0)}, \cdots,\mathbf{f}_i^{(2\ell)}, 0\le i<n$. We  obtain the desired $(n,k)$ Hadamard MSR code with sub-packetization  $N=(2\ell+1)2^n$. By convenience, still denote the code  by $\mathcal{C}$ and write  the column vector of length $(2\ell+1)2^n$ stored at node $i$ as $\mathbf{f}_i=((\mathbf{f}_i^{(0)})^\top, \cdots,(\mathbf{f}_i^{(2\ell)})^\top)^\top,i\in[0,n)$.

Recall from Section   \ref{Divisible} that the $2^n$ symbols of $\mathbf{f}_i^{(w)},w\in [0,2\ell]$ are divided into $2^{h'}$ groups, where $h'=2^m-1$, each group having $2^{n-h'}$ symbols. That means the data of each nodes is divided into $(2\ell+1)\times 2^{h'}$ groups. Similarly to \eqref{Eqn_S_g_11}, for $g\in [0,2^{h'})$, define $S_{g}$ as
\begin{eqnarray}\label{Eqn_S_g_2}
	S_g &=&\{a:a_{i_{(2\ell+1)u}}=g_u,a=(a_0,\cdots, a_{n-1})\in[0,2^n),u\in \{0,1,\cdots,h'-1\}\}\nonumber\\
	&=&\{S_g(0),\cdots,S_g(2^{n-h'}-1)\},
\end{eqnarray}
where $i_{(2\ell+1)u}\in\mathcal{E}=\{i_0,\cdots,i_{h-1}\}$ and
\begin{eqnarray}\label{order}
	S_g(0)<S_g(1)<\cdots<S_g(2^{n-h'}-1).
\end{eqnarray}

Then,  the parity-check equation \eqref{Eqn_Parity_eq} of code $\mathcal{C}$ with sub-packetization  $N=(2\ell+1)2^n$ can be rewritten as
\begin{eqnarray}\label{Eqn_C_21}
	\sum\limits_{i=0}^{n-1}\lambda_{i,S_g(v)}^tf_{i,S_g(v)}^{(w)}=0, \qquad w\in [0,2\ell],~ g\in[0,2^{h'}),~v\in[0,2^{n-h'}), ~t\in[0,r).
\end{eqnarray}

\subsection{Pairing}

In this subsection, we fix the failed node $i_u,u=(2\ell+1)u_1+u_2\in[0,h),u_1\in[0,h'),u_2\in[0,2\ell]$ or $u_1=h',u_2\in[0,2\ell)$. As the data of each nodes is divided into $(2\ell+1)\times 2^{h'}$ groups, we employ indices $(w,g),w\in [0,2\ell],g\in [0,2^{h'})$  to denote these groups, and denote the $v$-th symbol by $S_{w,g}(v)$, where $v\in[0,2^{n-h'})$.  Naturally, $S_g$ in \eqref{Eqn_S_g_2} is defined for $w=0$, i.e. $S_{0,g} =S_g$. Directly, for any $w\in [0,2\ell]$, we define ordered set
\begin{eqnarray}\label{Eqn_def_s_wg}
	S_{w,g} &=&S_g,\qquad g\in[0,2^{h'})
\end{eqnarray}
with a default order
\begin{eqnarray}\label{S_w,g}
	S_{w,g}(v) &\triangleq&S_g(v), \qquad g\in[0,2^{h'}),~v\in[0,2^{n-h'}).
\end{eqnarray}

Similarly, we need to pair these groups.

\textbf{\textit{Pairing}:} ~In order to make \eqref{Condition-0} and \eqref{Condition-1} valid, we have to reorder some groups. For the sake of simplification, the $v$-th symbol is still denoted as $S_{w,g}(v)$. Here we highlight that the reorder is related with the index of node $i_u$.

\begin{itemize}
	\item When $0\le u_1< h',u_2=0$, the same as the pairing method in section  \ref{Divisible}, pair the group $(0,g)\in \{0\}\times V_0$ with $ (0,g\oplus \mathbf{e}_{h',u_1})\in \{0\} \times V_{u_1+1}$. In this case, we use the default order of the latter, i.e.
	\begin{eqnarray}\label{Eqn_S_wg_1}
		S_{0,g\oplus \mathbf{e}_{h',u_1}}(v) &\triangleq&S_{g\oplus \mathbf{e}_{h',u_1}g}(v).
	\end{eqnarray}

	\item When $0\le u_1< h',0<u_2\le 2\ell$, pair $(0,g)\in\{0\}\times V_{u_1}$ with $(u_2,g)\in \{u_2\}\times V_{u_1}$, and reorder the elements of $S_{u_2,g}$ such that
	\begin{eqnarray}\label{Eqn_Swg_2}
		S_{u_2,g} (v)&\triangleq&S_g(v)\oplus  \mathbf{e}_{n,i_u}, \qquad g\in V_{u_1}.
	\end{eqnarray}
 According to \eqref{Eqn_S_g_2}, the symbols of $S_g$ is determined by components with indices $i_{(2\ell+1)u_1},u_1\in [0,h')$, i.e. $u_2=0$. In this case, $u_2\ne 0$, we have $\{S_g(v)\oplus  \mathbf{e}_{n,i_u}:v\in [0,2^{n-h'})\}=S_g$, $g\in V_{u_1}$, which is only a reorder of $S_{u_2,g}$ without adding or losing any elements.

	\item When $u_1= h',0\le u_2< 2\ell$, pair $(0,g)\in\{0\}\times V_{h'}$ with $(u_2+1,g)\in \{u_2+1\}\times V_{h'}$, and reorder the elements of $S_{u_2+1,g}$ such that
	\begin{eqnarray}\label{Eqn_Swg_3}
		S_{u_2+1,g} (v)&\triangleq&S_g(v)\oplus  \mathbf{e}_{n,i_u}, \qquad g\in V_{h'}.
	\end{eqnarray}	
 Similarly, \eqref{Eqn_Swg_3} is only a reorder of $S_{u_2+1,g}$ since $u_2+1\ne 0$ and $\{S_g(v)\oplus  \mathbf{e}_{n,i_u}:v\in [0,2^{n-h'})\}=S_g$, $g\in  V_{h'}$ by \eqref{Eqn_S_g_2}.
\end{itemize}

Note that $V_i,i\in [0,h']$ is the same as those sets defined in section  \ref{basic repair} such that
\begin{eqnarray}\label{Eqn_Vh_2}
V_0 \cup V_{1}\cup V_{2}\cup \cdots\cup V_{h'}=[0,2^{h'})
\end{eqnarray}
according to Lemma \ref{Lem_cover}.

\begin{Lemma}\label{Lem_Intra-groups}
	Given a failed node $i_u,u\in[0,h)$, for any $w\in [0,2\ell]$, $g\in [0,2^{h'})$, the \textbf{pairing} method above makes \eqref{Condition-0} and \eqref{Condition-1} hold.
\end{Lemma}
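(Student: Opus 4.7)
The plan is to reduce conditions \eqref{Condition-0} and \eqref{Condition-1} to a single combinatorial statement: for every $v\in[0,2^{n-h'})$, the two binary vectors associated with the paired $v$-th symbols differ in \emph{exactly} the bit position $i_u$. Once this is established, \eqref{Eqn_lambda} immediately gives \eqref{Condition-0} (the $i_u$-th bits differ, so $\lambda_{i_u,\cdot}$ takes two distinct values from $\{\lambda_{i_u,0},\lambda_{i_u,1}\}$) and \eqref{Condition-1} (all other bits agree, so $\lambda_{i,\cdot}$ is unchanged for $i\neq i_u$). So the core of the proof is a case analysis following the three branches of the \textbf{Pairing} recipe, parametrised by the decomposition $u=(2\ell+1)u_1+u_2$.

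In Case 1 ($0\le u_1<h'$, $u_2=0$), the pairing lives entirely in instance $w=0$, with partner index $g'=g\oplus\mathbf{e}_{h',u_1}$. Here I would literally replay the argument of Lemma~\ref{Lem_Inter-groups}, but with the defining coordinates of $S_g$ in \eqref{Eqn_S_g_2} being the positions $\{i_{(2\ell+1)u_1'}:u_1'\in[0,h')\}$ rather than $\{i_0,\ldots,i_{h-1}\}$. The ordering \eqref{order} plus the one-bit difference $g_{u_1}\oplus g'_{u_1}=1$ force $S_{g'}(v)=S_g(v)\pm 2^{i_{(2\ell+1)u_1}}=S_g(v)\oplus\mathbf{e}_{n,i_u}$ (using that $i_u=i_{(2\ell+1)u_1}$ when $u_2=0$), which is exactly the one-bit-flip required.

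In Cases~2 and 3 ($u_2\neq 0$), the pairing is between distinct instances with the \emph{same} group label $g$, so the one-bit difference must come from the reordering \eqref{Eqn_Swg_2} or \eqref{Eqn_Swg_3}. Here I would first verify that these are genuine reorderings of $S_{u_2,g}=S_g$: since $u_2\neq 0$, the index $i_u$ is \emph{not} among the defining coordinates $\{i_{(2\ell+1)u_1'}:u_1'\in[0,h')\}$ of $S_g$ in \eqref{Eqn_S_g_2}, so the map $a\mapsto a\oplus\mathbf{e}_{n,i_u}$ preserves the set $S_g$ and merely permutes its elements. With that checked, the prescription $S_{u_2,g}(v)=S_g(v)\oplus\mathbf{e}_{n,i_u}$ (respectively $S_{u_2+1,g}(v)=S_g(v)\oplus\mathbf{e}_{n,i_u}$ in Case~3) paired against $S_{0,g}(v)=S_g(v)$ delivers the one-bit-flip at position $i_u$ directly from the definition.

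The main obstacle, as I see it, is bookkeeping rather than substance: one must keep straight the correspondence between the three ranges of $u=(2\ell+1)u_1+u_2$, which coordinate of the binary representation is being flipped, and whether that coordinate is a ``defining'' coordinate of $S_g$ (which determines whether the flip moves between groups as in Case~1, or stays inside a group as in Cases~2--3). Assembling the three cases into a uniform conclusion ``paired $v$-th symbols differ in exactly bit $i_u$'' is then the single line that yields both \eqref{Condition-0} and \eqref{Condition-1} simultaneously and completes the lemma.
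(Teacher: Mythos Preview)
Your proposal is correct and follows essentially the same route as the paper's proof: reduce to showing that the paired $v$-th symbols differ in exactly the bit $i_u$, handle Case~1 by replaying the ordering argument of Lemma~\ref{Lem_Inter-groups} with the new defining coordinates $\{i_{(2\ell+1)u_1'}:u_1'\in[0,h')\}$, and handle Cases~2--3 directly from the reordering prescriptions \eqref{Eqn_Swg_2}--\eqref{Eqn_Swg_3}, then invoke \eqref{Eqn_lambda}. One small bookkeeping slip: your parenthetical ``$(u_2\neq 0)$'' for Cases~2 and~3 is inaccurate for Case~3, where $u_2$ may equal $0$; the correct reason $i_u$ is not a defining coordinate there is that $u_1=h'$, not that $u_2\neq 0$.
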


\begin{proof}
	During the pairing process above, suppose that group $(w,g)$ is paired with group $(w',g')$. For any $v\in [0,2^{n-h'})$, let $a=S_{w,g}(v), b=S_{w',g'}(v)$.
	
	When $0\le u_1< h',u_2=0,u=(2\ell+1)u_1$, we have $w'=w=0$, $g'=g\oplus \mathbf{e}_{h',u_1}$. According to \eqref{Eqn_S_g_2} and \eqref{order},
	\begin{eqnarray*}
		a_{i_u}=g_{u_1},b_{i_u}=g'_{u_1}=g_{u_1}\oplus1.
	\end{eqnarray*}
By \eqref{Eqn_def_s_wg}, that is
\begin{equation*}
S_{0,g'}=S_{g'}=\begin{cases}
S_g+2^{i_u}=S_{0,g}+2^{i_u}&\qquad \text{if }g_{u_1}=0,\\
S_g-2^{i_u}=S_{0,g}-2^{i_u}&\qquad \text{if }g_{u_1}=1.\\
\end{cases}
\end{equation*}
Thus, by \eqref{order}, \eqref{S_w,g}, and \eqref{Eqn_S_wg_1}, we have for any $v\in [0,2^{n-h'})$
\begin{equation*}
b=S_{0,g'}(v)=S_{g'}(v)=\begin{cases}
S_{g}(v)+2^{i_u}=S_{0,g}(v)+2^{i_u}=a+2^{i_u}, &\qquad a_{i_u}=0,\\
S_{g}(v)-2^{i_u}=S_{0,g}(v)-2^{i_u}=a-2^{i_u}, &\qquad a_{i_u}=1,\\
\end{cases}
\end{equation*}
which implies
	\begin{eqnarray*}
		b_{i_u}&=&a_{i_u}\oplus 1,\\
		b_{i}&=&a_{i}, \qquad i\in \mathcal{E}\setminus\{i_u\}.
	\end{eqnarray*}
 The preceding equations imply that \eqref{Condition-0} and \eqref{Condition-1} follow by \eqref{Eqn_lambda}.
	
Next, we prove the other two cases, i.e., $0\le u_1 < h',0<u_2\le 2\ell$ or $u_1= h',0\le u_2< 2\ell$.
 Note that for the $S_{w=0,g}$ we use the default order, i.e., $S_{w,g}(v)=S_g(v)$ for any $v\in [0,2^{n-h'}).$ Then, based on \eqref{Eqn_Swg_2} and \eqref{Eqn_Swg_3}, $b=a\oplus \mathbf{e}_{n,i_u}$ holds
 for both of the remaining two cases.
That is,
	\begin{eqnarray*}
		b_{i_u}&=&a_{i_u}\oplus 1,\\
		b_{i}&=&a_{i},\qquad i\in [0,n)\setminus\{i_u\}.
	\end{eqnarray*}
Thus, we conclude that \eqref{Condition-0} and \eqref{Condition-1} hold by \eqref{Eqn_lambda}, which completes the proof.
\end{proof}

\subsection{Repair scheme}

When $(h+1)\nmid 2^n$, the cooperative repair process of $h$ failed nodes is as follows.

\textbf{Download phase:} ~For failed node $i_u,u=(2\ell+1)u_1+u_2\in \mathcal{E}$, $u_1\in[0,h'),u_2\in[0,2\ell]$ or $u_1=h',u_2\in[0,2\ell)$, the downloaded phase is considered by three different cases. Based on Lemma  \ref{Lem_Intra-groups} and Theorem  \ref{Thm_Repair Condition},
\begin{itemize}
\item Case 1. When $0\le u_1< h',u_2=0$, node $i_u$ downloads
\begin{eqnarray}\label{DV_1}
	\{f_{j,S_{0,g}(v)}^{(0)}+f_{j,S_{0,g\oplus  \mathbf{e}_{h',u_1}}(v)}^{(0)}: g\in V_0,v\in[0,2^{n-h'}) \}
\end{eqnarray}
from helper nodes $j\in \mathcal{H}$ to recover
\begin{eqnarray*}
	\{f_{i_u,S_{0,g}(v)}^{(0)},f_{i_u,S_{0,g\oplus  \mathbf{e}_{h',u_1}}(v)}^{(0)}: g\in V_0, v\in[0,2^{n-h'})\}
\end{eqnarray*}
for itself and
\begin{eqnarray}\label{Eqn_iu_recover_25}
	\{f_{i,S_{0,g}(v)}^{(0)}+f_{i,S_{0,g\oplus  \mathbf{e}_{h',u_1}}(v)}^{(0)}: g\in V_0, v\in[0,2^{n-h'})\}
\end{eqnarray}
for other failed nodes $i\in \mathcal{E}\backslash\{i_u\}$.
\item Case 2. When $0\le u_1< h',0<u_2\le 2\ell$, node $i_u$ downloads
\begin{eqnarray}\label{DV_2}
	\{f_{j,S_{0,g}(v)}^{(0)}+f_{j,S_{u_2,g}(v)}^{(u_2)}: g\in V_{u_1}, v\in[0,2^{n-h'})\}
\end{eqnarray}
from helper nodes $j\in \mathcal{H}$ to recover
\begin{eqnarray*}
	\{f_{i_u,S_{0,g}(v)}^{(0)},f_{i_u,S_{u_2,g}(v)}^{(u_2)}: g\in V_{u_1}, v\in[0,2^{n-h'})\}
\end{eqnarray*}
for itself and
\begin{eqnarray}\label{Eqn_iu_recover_26}
	\{f_{i,S_{0,g}(v)}^{(0)}+f_{i,S_{u_2,g}(v)}^{(u_2)}: g\in V_{u_1}, v\in[0,2^{n-h'})\}
\end{eqnarray}
for other failed nodes $i\in \mathcal{E}\backslash\{i_u\}$.
\item Case 3. When $u_1= h',0\le u_2< 2\ell$, node $i_u$ downloads
\begin{eqnarray}\label{DV_3}
	\{f_{j,S_{0,g}(v)}^{(0)}+f_{j,S_{u_2+1,g}(v)}^{(u_2+1)}: g\in V_{h'}, v\in[0,2^{n-h'})\}
\end{eqnarray}
from helper nodes $j\in \mathcal{H}$ to recover
\begin{eqnarray*}
	\{f_{i_u,S_{0,g}(v)}^{(0)},f_{i_u,S_{u_2+1,g}(v)}^{(u_2+1)}: g\in V_{h'}, v\in[0,2^{n-h'})\}
\end{eqnarray*}
for itself and
\begin{eqnarray}\label{Eqn_iu_recover_u_1h'}
	\{f_{i,S_{0,g}(v)}^{(0)}+f_{i,S_{u_2+1,g}(v)}^{(u_2+1)}: g\in V_{h'}, v\in[0,2^{n-h'})\}
\end{eqnarray}
for other failed nodes  $i\in \mathcal{E}\backslash\{i_u\}$.
\end{itemize}

\vspace{3mm}
\textbf{Cooperative phase:} ~Other failed nodes $i_{\bar{u}},\bar{u}=(2\ell+1)\bar{u}_1+\bar{u}_2\in  \mathcal{E}\backslash\{i_u\}$, $\bar{u}_1\in[0,h'),\bar{u}_2\in[0,2\ell]$ or $\bar{u}_1=h',\bar{u}_2\in[0,2\ell)$ transfers
\begin{numcases}{}
	\{f_{i_u,S_{0,g}(v)}^{(0)}+f_{i_u,S_{0,g\oplus  \mathbf{e}_{h',\bar{u}_1}}(v)}^{(0)}: g\in V_0, v\in[0,2^{n-h'})\},  &if ~$0\le \bar{u}_1 < h', \bar{u}_2=0$\label{Eqn_iu_recover u_2=0}\\
	\{f_{i_u,S_{0,g}(v)}^{(0)}+f_{i_u,S_{\bar{u}_2,g}(v)}^{(\bar{u}_2)}: g\in V_{\bar{u}_1}, v\in[0,2^{n-h'})\},  &if ~$0\le \bar{u}_1 < h',0< \bar{u}_2\le 2\ell$\label{Eqn_iu_recover u_2==0}\\		
	\{f_{i_u,S_{0,g}(v)}^{(0)}+f_{i_u,S_{\bar{u}_2+1,g}(v)}^{(\bar{u}_2+1)}: g\in V_{h'}, v\in[0,2^{n-h'})\},  &if ~$\bar{u}_1 = h',0\le \bar{u}_2< 2\ell$\label{Eqn_iu_recover u_1=h'}
\end{numcases}
repaired by \eqref{Eqn_iu_recover_25}, \eqref{Eqn_iu_recover_26} or \eqref{Eqn_iu_recover_u_1h'} to node $i_u$.

\begin{itemize}
\item Case 1. When $0\le u_1 < h', u_2=0$, the failed node $i_u$ utilizes its own data $\{f_{i_u,S_{0,g}(v)}^{(0)},g\in V_0, v\in[0,2^{n-h'})\}$ to solve
\begin{eqnarray*}
	\mathcal{F}_1&=&\{f_{i_u,S_{0,g\oplus  \mathbf{e}_{h',\bar{u}_1}}(v)}^{(0)}: g\in V_0, v\in[0,2^{n-h'}), \bar{u}_1\in[0,h')\setminus\{u_1\}\},\\
	\mathcal{F}_2&=&\{f_{i_u,S_{\bar{u}_2,g}(v)}^{(\bar{u}_2)}: g\in V_{\bar{u}_1}, v\in[0,2^{n-h'}),\bar{u}_1\in[0,h'), \bar{u}_2\in [1,2\ell]\},\\
	\mathcal{F}_3&=&\{f_{i_u,S_{\bar{u}_2+1,g}(v)}^{(\bar{u}_2+1)}: g\in V_{h'}, v\in[0,2^{n-h'}),\bar{u}_2\in [0,2\ell)\}
\end{eqnarray*}
from the data in \eqref{Eqn_iu_recover u_2=0}, \eqref{Eqn_iu_recover u_2==0}, \eqref{Eqn_iu_recover u_1=h'}, respectively.

Firstly, by \eqref{Eqn_Vh_2} noting that
$$\{g, g\oplus  \mathbf{e}_{h',u_1}, g\oplus  \mathbf{e}_{h',\bar{u}_1}:g\in V_0,\bar{u}_1\in [0,h')\setminus\{u_1\}\}=\cup_{i=0}^{h'}V_i=[0,2^{h'})$$
 thus,
$$\mathcal{F}_1\cup\{f_{i_u,S_{0,g}(v)}^{(0)},f_{i_u,S_{0,g\oplus  \mathbf{e}_{h',u_1}}(v)}^{(0)}:g\in V_0,v\in[0,2^{n-h'})\}$$ contains all the symbols of $f_{i_u}^{(0)}$.

Secondly, for a fixed $0<\bar{u}_2\le 2\ell$,  we can recover $$\{f_{i_u,S_{\bar{u}_2,g}(v)}^{(\bar{u}_2)}:g\in\cup_{i=0}^{h'-1}V_i, v\in[0,2^{n-h'})\}$$
from $\mathcal{F}_2$ with $\bar{u}_1$ ranging over $[0,h')$. Combining it with $\mathcal{F}_3$, we obtain all the symbols of $f_{i_u}^{(\bar{u}_2)}$, $\bar{u}_2\in[1,2\ell]$ by \eqref{Eqn_Vh_2}.

The preceding two steps have recovered all symbols of $f_{i_u}^{(w)}$, $w\in[0,2\ell]$.

\item Cases 2 and 3. When $0\le u_1< h',0<u_2\le 2\ell$ or $u_1=h',0\le u_2< 2\ell$,  if $u_1>0$, the failed node $i_u$ utilizes its own data $\{f_{i_u,S_{0,g}(v)}^{(0)}:g\in V_{u_1},v\in[0,2^{n-h'})\}$ to solve
\begin{eqnarray*}
	\{f_{i_u,S_{0,g}(v)}^{(0)}: g\in V_0, v\in[0,2^{n-h'})\}
\end{eqnarray*}
from the data in \eqref{Eqn_iu_recover u_2=0} transferred by failed nodes $i_{\bar{u}},\bar{u}=(2\ell+1)(u_1-1)$, otherwise it does nothing. Now, the rest repair process are exactly the same as the repair of the failed node $i_{u}$, $u=(2\ell+1)u_1$ in Case1.
\end{itemize}
\vspace{3mm}	

During the download phase, by \eqref{DV_1}, \eqref{DV_2} and \eqref{DV_3}, each failed node downloads $2^{h'-m}\cdot 2^{n-h'}=2^{n-m}=N/(h+1)$ symbols from each of the $k+1$ helper nodes, i.e., the repair bandwidth at this phase is
\begin{eqnarray*}
	\gamma_1 &= h(k+1)\cdot{N\over h+1}.
\end{eqnarray*}

Next, during the cooperative phase, by \eqref{Eqn_iu_recover u_2=0}, \eqref{Eqn_iu_recover u_2==0} and \eqref{Eqn_iu_recover u_1=h'}, each failed node accesses $2^{h'-m}\cdot 2^{n-h'}=2^{n-m}=N/(h+1)$ symbols from other $h-1$ failed nodes, i.e., the repair bandwidth at this phase is
\begin{eqnarray*}
	\gamma_2 &= h(h-1)\cdot{N\over h+1}.
\end{eqnarray*}
Totally, the repair bandwidth is
\begin{eqnarray*}\label{Eqn_Repair Bandwidth2}
	\gamma &= & \gamma_1+\gamma_2 \nonumber\\
	&=&\frac{h(k+h)\times N}{h+1}
\end{eqnarray*}
which attains the optimal repair bandwidth according to \eqref{Eqn_Lower_Bound}.

This concludes the property of our repair scheme as well as the proof of the following Theorem \ref{Thm_Repair_2}.
\begin{Theorem}\label{Thm_Repair_2}
	When $(h+1)\nmid2^n$ where $h+1=(2\ell+1)2^m$, $m\ge 0$, $\ell\ge 1$, any $h\ge 2$ failed nodes of $(n,k)$ Hadamard MSR code $\mathcal{C}$ with sub-packetization $N=(2\ell+1)2^n$ can be optimally cooperative repaired with $d=k+1$ helper nodes.
\end{Theorem}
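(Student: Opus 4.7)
The plan is to assemble the ingredients developed in Section V into a coherent proof. First, I would invoke the grouping and pairing introduced in Section V-B together with Lemma \ref{Lem_Intra-groups}, which ensures that whenever the failed node $i_u$ pairs two groups $(w,g)$ and $(w',g')$ the corresponding coordinates satisfy the repair conditions \eqref{Condition-0} and \eqref{Condition-1}. Theorem \ref{Thm_Repair Condition} then immediately guarantees that the downloads prescribed in the download phase suffice for $i_u$ to recover its own two symbols at every paired coordinate and to recover the componentwise sum for every other failed node. Hence the Hamming-based intra-instance pairing (Case 1) and the inter-instance pairing via reordering (Cases 2 and 3) are both subsumed by a single instantiation of Theorem \ref{Thm_Repair Condition} once Lemma \ref{Lem_Intra-groups} is available.

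Next I would analyze the cooperative phase. The goal is to show that, after receiving the sums transmitted by the other failed nodes, $i_u$ can solve for every symbol of $\mathbf{f}_{i_u}$. The observation is that from the data $i_u$ already reconstructed in the download phase, one summand in each received pair is known, so the other summand can be peeled off. The combinatorial payoff then comes from Lemma \ref{Lem_cover}: since $V_0\cup V_1\cup\cdots\cup V_{h'}=[0,2^{h'})$, letting $\bar{u}_1$ range over $[0,h')$ covers every group index of instance $0$, while the pairs involving instances $\bar{u}_2\in[1,2\ell]$ together with those coming from the $V_{h'}$ block exhaust the remaining $2\ell$ instances. Thus every coordinate of $\mathbf{f}_{i_u}$ is eventually reached, handling Cases 1, 2, and 3 of Section V-C.

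Finally, I would count the repair bandwidth. Each set $V_i$ has $2^{h'-m}$ elements and each group contains $2^{n-h'}$ symbols, so every failed node downloads $2^{h'-m}\cdot 2^{n-h'}=2^{n-m}=N/(h+1)$ symbols from each of the $k+1$ helpers and transmits the same quantity to each of the $h-1$ other failed nodes. Adding these contributions yields $\gamma=h(k+1)\cdot N/(h+1)+h(h-1)\cdot N/(h+1)=h(k+h)N/(h+1)$, which matches the lower bound \eqref{Eqn_Lower_Bound} and establishes optimality.

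The hard part, in my view, is not the bandwidth count but the bookkeeping needed to verify rigorously that the three-case pairing is self-consistent and that together the three kinds of pairs reconstruct every one of the $(2\ell+1)2^n$ symbols. One must check that the reorderings in \eqref{Eqn_Swg_2} and \eqref{Eqn_Swg_3} do not shift coordinates out of $S_g$: they are bijections $S_g\to S_g$ because $u_2\neq 0$ makes the XOR-shift by $\mathbf{e}_{n,i_u}$ preserve the defining constraints of $S_g$ in \eqref{Eqn_S_g_2}. One must also check that the partial information recovered when $u_1=0$ is patched together correctly with the data for $u_1>0$ in the cooperative step of Cases 2 and 3. Getting this index-tracking right is the main obstacle; once it is in place, the rest follows mechanically from Theorem \ref{Thm_Repair Condition} and Lemma \ref{Lem_cover}.
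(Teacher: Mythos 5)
Your outline follows the same route as the paper (grouping, the three-case pairing of Lemma \ref{Lem_Intra-groups}, Theorem \ref{Thm_Repair Condition} for the download phase, Lemma \ref{Lem_cover} plus peeling for the cooperative phase, and the bandwidth count), and the download-phase and bandwidth parts are correct. But there is a genuine gap in your treatment of the cooperative phase. Your key claim that ``from the data $i_u$ already reconstructed in the download phase, one summand in each received pair is known'' is false as stated: a Case-1 node $i_u$ with $u=(2\ell+1)u_1$ learns in the download phase only its own symbols on the group sets $V_0$ and $V_{u_1+1}$ of instance $0$, whereas the received sums \eqref{Eqn_iu_recover u_2==0} involve $f^{(0)}_{i_u,S_{0,g}(v)}$ with $g\in V_{\bar u_1}$ for every $\bar u_1\in[0,h')$, so for general $\bar u_1$ neither summand is known at that point. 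The peeling must be sequenced, as the paper does: first use the sums \eqref{Eqn_iu_recover u_2=0} together with the known $V_0$-data and Lemma \ref{Lem_cover} to reconstruct all of $\mathbf{f}^{(0)}_{i_u}$, and only then strip \eqref{Eqn_iu_recover u_2==0} and \eqref{Eqn_iu_recover u_1=h'} to recover the instances $1,\dots,2\ell$.

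More importantly, the step that actually carries the weight in Cases 2 and 3 is missing. A failed node $i_u$ with $u_1>0$ and $u_2\neq 0$ (or $u_1=h'$) obtains from the download phase its instance-$0$ data only on $V_{u_1}$ (respectively $V_{h'}$), not on $V_0$, so the sequencing above cannot even start for it. The paper resolves this with an explicit bootstrap: such a node takes the sums transferred by the specific Case-1 node $i_{\bar u}$ with $\bar u=(2\ell+1)(u_1-1)$, whose pairing matches $V_0$ with $V_{u_1}$; since $i_u$ already knows its symbols on $\{0\}\times V_{u_1}$, it can strip these sums to obtain its instance-$0$ data on $V_0$, after which it proceeds exactly as a Case-1 node (and if $u_1=0$ no bootstrap is needed because $V_{u_1}=V_0$). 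Your sentence about ``the partial information recovered when $u_1=0$'' gestures at a patching issue but does not describe this mechanism, and without it the cooperative phase fails to reconstruct $\mathbf{f}_{i_u}$ for all the Case-2 and Case-3 nodes with $u_1>0$. Since this bootstrapping, together with the reorderings \eqref{Eqn_Swg_2}--\eqref{Eqn_Swg_3} (which you do justify correctly), is precisely what distinguishes Theorem \ref{Thm_Repair_2} from Theorem \ref{Thm_Repair_1}, it must be supplied rather than deferred to ``index-tracking.''
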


\begin{Remark}
For $(h+1)\nmid 2^n$, the  optimal cooperative repair scheme in  \cite{New-Constructions} requires  the sub-packetization $(h+1)N=(2\ell+1)2^{n+m}$. In contrast, the sub-packetization in  Theorem \ref{Thm_Repair_2} is  $2^m$ times less.
\end{Remark}

\begin{Example}Based on $(n=14,k=2)$ Hadamard  MSR code with sub-packetization $N=2^{14}$ in Example \ref{Example_1}, we obtain new $(n=14,k=2)$ Hadamard MSR code $\mathcal{C}$ with sub-packetization $N=3\times2^{14}$ by generating 3 instances. At this point, $m=2, \ell=1,h'=3$, the number of failed nodes is $h=11$. Assume that the failed nodes are the first $11$ nodes, i.e., $\{i_0,i_1,\cdots,i_h\}=\{0,1,\cdots,10\}$, and the remaining $3$ nodes are helper nodes.
		
		Firstly, divide the $2^{14}$ symbols of the base code into $8$ groups based on \eqref{Eqn_S_g_2}, each group having $2^{11}$ symbols. Then 3 instances are divided into $M=3\times 8$ groups. For any failed node $i_u\in\mathcal{E}$, the pairing method is shown at the table  \ref{non-Divisible-group}, where $\{i\}\times V_j$ is group $(i,g),g\in V_j$.
	\begin{table}[h]
		\begin{center}
			\caption{the pairing method for $h=11$ failed nodes}
			\label{non-Divisible-group}
			\setlength{\tabcolsep}{6pt}
			\begin{tabular}{|c|c|c|c|c|c|c|}
				\hline
				node $u$& 0 & 1 & 2 & 3 & 4 & 5 \\
				\hline
				Pairing &  $\{0\}\times V_0,\{0\}\times V_1$	&  $\{0\}\times V_0,\{1\}\times V_0$ &  $\{0\}\times V_0,\{2\}\times V_0$ & $\{0\}\times V_0,\{0\}\times V_2$ &  $\{0\}\times V_1,\{1\}\times V_1$ & $\{0\}\times V_1,\{2\}\times V_1$  \\
				\hline
				node $u$ & 6 & 7 & 8  & 9 & 10 & \\
				\hline
				Pairing &  $\{0\}\times V_0,\{0\}\times V_3$ &  $\{0\}\times V_2,\{1\}\times V_2$ &  $\{0\}\times V_2,\{2\}\times V_2$ & $\{0\}\times V_3,\{1\}\times V_3$ & $\{0\}\times V_3,\{2\}\times V_3$&\\
				\hline
			\end{tabular}
		\end{center}
	\end{table}

	In the above table, to repair nodes $1,2,4,5,7$ and $8$ reorder the symbols of $(w,g)$ by \eqref{Eqn_Swg_2} and node $9$ and $10$ by \eqref{Eqn_Swg_3}. The reorder processes for these nodes are shown in Table \ref{reorder}.
		\begin{table}[h]
		\begin{center}
			\caption{the reorder of the symbols of node $i\in [0,10]\backslash \{0,3,6\},$ where $S_g\oplus\mathbf{e}_{n,u}=\{S_g(v)\oplus \mathbf{e}_{n,u}:v\in [0,2^{11})\}$}
			\label{reorder}
			\setlength{\tabcolsep}{7pt}
			\begin{tabular}{|c|c|c|c|c|c|c|}
				\hline
				node $u$& 1 & 2 & 4 & 5  & 7 & 8 \\
				\hline
				\multirow{2}{*}{Reorder} &$ S_{1,0}=S_0\oplus\mathbf{e}_{14,1}$ &$ S_{2,0}=S_0\oplus\mathbf{e}_{14,2}$ & $S_{1,1}=S_1\oplus\mathbf{e}_{14,4}$ & $S_{2,1}=S_1\oplus\mathbf{e}_{14,5}$ & $S_{1,2}=S_2\oplus\mathbf{e}_{14,7}$ & $S_{2,2}=S_2\oplus\mathbf{e}_{14,8}$ \\
				&$ S_{1,7}=S_7\oplus\mathbf{e}_{14,1}$ &$ S_{2,7}=S_7\oplus\mathbf{e}_{14,2}$ &$ S_{1,6}=S_6\oplus\mathbf{e}_{14,4}$ & $S_{2,6}=S_6+\mathbf{e}_{14,5}$ & $S_{1,5}=S_5\oplus\mathbf{e}_{14,7}$ & $S_{2,5}=S_5\oplus\mathbf{e}_{14,8}$  \\
				\hline
				node $u$ & 9 & 10 &&&&\\
				\hline
				\multirow{2}{*}{Reorder} & $S_{1,4}=S_4\oplus\mathbf{e}_{14,9}$ & $S_{2,4}=S_4\oplus\mathbf{e}_{14,10}$ &&&&\\
				& $S_{1,3}=S_3\oplus\mathbf{e}_{14,9}$ & $S_{2,3}=S_3\oplus\mathbf{e}_{14,10}$&&&&\\
				\hline
			\end{tabular}
		\end{center}
	\end{table}

	When $u_2=0$, the data of node $0,3,6$ can be paired and recovered following the previous section. We will only illustrate the repair process of the failed nodes when $u_2>0$.
	
	Take symbols $S_{0,0}(1)=0$ and $S_{1,0}(1)=2$ of node 1 as an example, applying  conditions \eqref{Condition-0}, \eqref{Condition-1} and equation \eqref{Eqn_C_21}, we have
	\begin{eqnarray}\label{E5}
		\lambda_{1,0}^tf_{1,0}^{(0)}+\lambda_{1,2}^tf_{1,2}^{(1)}+\sum\limits_{i\ne1}(\lambda_{i,0}^tf_{i,0}^{(0)}+\lambda_{i,2}^tf_{i,2}^{(1)})=0, \qquad t\in[0,11].
	\end{eqnarray}
	 From \eqref{Eqn_lambda},
	$$\lambda_{1,0}\ne \lambda_{1,2},\lambda_{i,0}=\lambda_{i,2},i\ne 1.$$
	
	There are 12 unknowns $f_{1,0}^{(0)},f_{1,2}^{(1)},f_{i,0}^{(0)}+f_{i,2}^{(1)},i\in\{0\}\cup [2,10]$ in equation \eqref{E5}. With $t$ enumerating $t\in[0,11]$, the equation has unique solution, so the desired data could be solved.
	
	To avoid duplication, readers are encouraged to validate the download and cooperative phases by following the analysis we presented.
\end{Example}

\section{Conclusion}\label{Conclusion}

In this paper, a new cooperative repair scheme  was proposed for $(n,k)$ Hadamard MSR code with $h$ node failures, which downloads the data from $d=k+1$ helper nodes. Particularly, the cooperative repair scheme is feasible for sub-packetization   $2^n$ when $(h+1)|2^n$ and $(2\ell+1)2^n$ when $h+1=(2\ell+1)2^m$ for $m\ge 0$ and $\ell\ge 1$. In contrast to the known best result in \cite{New-Constructions}, the sub-packetization is greatly decreased.

Unfortunately, our scheme  only works for $d=k+1$ helper nodes to recover the failed data, and can not be generalized to arbitrary $k\le d\le n-h$. Therefore, our next task is to find  exact cooperative repair scheme for any $d$ helper nodes with small sub-packetization.


\begin{thebibliography}{99}


\bibitem{balaji2018erasure} S. B. Balaji, M. N. Krishnan, M. Vajha, V. Ramkumar, B. Sasidharan and P. V. Kumar, ``Erasure coding for distributed storage: An overview," \textit{Sci. China Inf. Sci.}, vol. 61, no. 10, pp. 1-45, Oct. 2018.

\bibitem{Network-coding}A. G. Dimakis, P. B. Godfrey, Y. Wu, M. J. Wainwright, and K. Ramchandran, ``Network coding for distributed storage systems," \textit{IEEE Trans. Inf. Theory}, vol. 56, no. 9, pp. 4539-4551, Sept. 2010.

\bibitem{Cooperative-recovery} Y. C. Hu, Y. L. Xu, X. Z. Wang, C. Zhan, and P. Li, ``Cooperative recovery of distributed storage systems from multiple losses with network coding," \textit{IEEE J. Sel. Areas Commun.}, vol. 28, no. 2, pp. 268-276, Feb. 2010.


\bibitem{T-K-Moon} T. K. Moon, \textit{Error correction coding : Mathematical methods and algorithms}, Wiley-Interscience, 2005.

\bibitem{Hadamard-MSR} D. S. Papailiopoulos, A. G. Dimakis and V. R. Cadambe, ``Repair optimal erasure codes through hadamard designs," \textit{2011 49th Annual Allerton Conference on Communication, Control, and Computing (Allerton)}, pp. 1382-1389, 2011.

\bibitem{Polynomial} I. S. Reed and G. Solomon, ``Polynomial codes over certain finite fields," \textit{J. Soc. Ind. Appl. Math.}, vol. 8, no. 2, pp. 300-304, Jun. 1960.

\bibitem{Cooperative-regenerating-codes} K. W. Shum and Y. C. Hu, ``Cooperative regenerating codes," \textit{IEEE Trans. Inf. Theory}, vol. 59, no. 11, pp. 7229-7258, Nov. 2013.

\bibitem{Tang-Hadamard}  X. H. Tang, B. Yang, J. Li, and H. D. L. Hollmann, ``A new repair strategy for the hadamard minimum storage regenerating codes for distributed storage systems," \textit{IEEE Trans. Inf. Theory},  vol. 61, no. 10, pp. 5271-5279, Oct. 2015.



\bibitem{New-Constructions} M. Ye, ``New constructions of cooperative MSR codes: Reducing node size to exp$(O(n))$," \textit{IEEE Trans. Inf. Theory}, vol. 66, no. 12, pp. 7457-7464, Dec. 2020.

\bibitem{Cooperative-repair} M. Ye and A. Barg, ``Cooperative repair: Constructions of optimal MDS codes for all admissible parameters," \textit{IEEE Trans. Inf. Theory}, vol. 65, no. 3, pp. 1639-1656, Mar. 2019.

\bibitem{Explicit-Constructions-of-Optimal-Access} Y. Q. Zhang, Z. F. Zhang and L. Wang, ``Explicit constructions of optimal-access MSCR codes for all parameters," \textit{IEEE Commun. Lett.}, vol. 24, no. 5, pp. 941-945, May 2020.



\end{thebibliography}
\end{document}